\documentclass{llncs}
\usepackage{url}
\usepackage{amsmath,amssymb}
\usepackage{xspace}
\usepackage[noend]{algorithm}
\usepackage[noend]{algorithmic}
\usepackage{tikz}
\usepackage{pgfplots}

\newcommand{\mr}{\textsc{MapReduce}\xspace}
\newcommand{\dmr}{\textsc{DS-MapReduce}\xspace}
\newcommand{\mrs}{\textsc{MapReduce$(\sigma)$}\xspace}
\newcommand{\algomr}{\textsc{Algorithm} $\mathcal{MR}(\alpha,\gamma)$\xspace}
\newcommand{\fcfs}{\textsc{FCFS}\xspace}
\newcommand{\mrf}{\textsc{MapReduce$(\fcfs)$}\xspace}
\newcommand{\mrsr}{\textsc{MapReduce$(\sr)$}\xspace}
\newcommand{\sr}{\textsc{SR}\xspace}

\pagestyle{plain}

\begin{document}

\title{Energy Efficient Scheduling of MapReduce Jobs
}

\author{Evripidis Bampis\inst{1}
\and Vincent Chau\inst{2}
\and Dimitrios Letsios\inst{1}
\and Giorgio Lucarelli\inst{1}
\and Ioannis Milis\inst{3}
\and Georgios Zois\inst{1,3}
}

\institute{Sorbonne Universit\'es, UPMC Univ Paris 06, UMR 7606, LIP6, F-75005, France.\\
\email{\{Evripidis.Bampis,Dimitrios.Letsios,Giorgio.Lucarelli, Georgios.Zois\}@lip6.fr}
\and IBISC, Universit\'{e} d'\'{E}vry, France.\\
\email{vincent.chau@ibisc.univ-evry.fr}
\and Dept. of Informatics, AUEB, Athens, Greece.\\
\email{milis@aueb.gr}
}

\maketitle

\begin{abstract}
MapReduce is emerged as a prominent programming model for data-intensive computation.
In this work, we study power-aware MapReduce scheduling in the speed scaling setting first introduced by Yao et al.~[FOCS 1995].
We focus on the minimization of the total weighted completion time of a set of MapReduce jobs under a given budget of energy.
Using a linear programming relaxation of our problem, we derive a polynomial time constant-factor approximation algorithm.
We also propose a convex programming formulation that we combine with standard list scheduling policies,
and we evaluate their performance using simulations.
\end{abstract}

\section{Introduction}

MapReduce has been established
as a standard programming model for parallel computing in data centers or computational grids and it is currently used
for several applications including search indexing, web analytics  or data mining.
However, data centers consume an enormous amount of energy and hence, energy efficiency has emerged as an important issue in the data-processing framework.
Several empirical works have been carried-out in order to study different mechanisms for the reduction of the energy consumption in the MapReduce setting
and especially for the Hadoop framework \cite{FellerRM13,FengLZY12,GoiriLNGTB12}.
The main  mechanisms for energy saving are the \emph{power-down} mechanism where in periods of low-utilization some servers are switched-off,
and the \emph{speed-scaling} mechanism (or DVFS for Dynamic Voltage Frequency Scaling) where the servers' speeds may be adjusted dynamically \cite{YaoDS95}.
Until lately, most work in the MapReduce framework  were focused on the  power-down mechanism, but recently,
Wirtz and Ge \cite{WirtzG11} showed that for some computation intensive MapReduce applications the use of intelligent speed-scaling
may lead to significant energy savings.
In this paper, we study power-aware MapReduce scheduling in the speed scaling setting from a theoretical point of view.

In a typical MapReduce framework, the execution of a MapReduce job creates a number of Map and Reduce tasks.
Each Map task processes a portion of the input data and outputs a number of key-value pairs.
All key-value pairs having the same key are then given to a Reduce task which processes the values associated with a key to generate the final result.
This means that each Reduce task cannot start before the completion of the last Map task of the same job.
In other words, there is a complete bipartite graph implying the precedences between Map and Reduce tasks of a job.
However, the Map tasks of a job can be executed in parallel and the same holds for its Reduce tasks.

In what follows we consider a set of MapReduce jobs that have to be  executed on a set of speed-scalable processors,
i.e., on processors that can adjust dynamically their speed \cite{YaoDS95}.
In the speed scaling setting, each task is associated with a work volume instead of a processing time
and the scheduler  has to decide not only the processor and the time interval where a task is executed,
but also its speed over time, taking into account the energy consumption.
High processor's speeds are in favor of performance  at the price of high energy consumption.
Each job consists of a set of Map tasks and a set of Reduce tasks, with every task having a positive work volume.
Each job is also associated with a positive weight representing its importance/priority, and a release date (or arrival time).
Like in~\cite{ChangKKLLM11,ChenKL12}, we consider that the Map and the Reduce tasks of each job are \emph{preassigned} to the processors and in this way we take into account data locality, i.e. the fact that each Map task has to be executed on the server where
its data are located. Given that  the preemption of tasks, i.e. the possibility of interrupting a task and resuming it later, may cause important overheads we do not allow it. This is also the case often in practice: Hadoop does not offer the possibility of preemption \cite{MoseleyDKS11}.
Our goal is to schedule all the tasks to the processors,
so as to minimize the total weighted completion time of jobs respecting a given budget of energy.

\paragraph{Related Work.}
Chang et al.~\cite{ChangKKLLM11} consider a set of MapReduce jobs with their Map and Reduce tasks
preassigned to processors and their goal is to minimize the total weighted completion time of jobs.
They proposed approximation algorithms of ratios 3 and 2 for arbitrary and common release dates, respectively.
However, they do not consider neither distinction nor dependencies between Map and Reduce tasks of a job.
Moreover, their model  falls  into a well-studied problem known as  {\em concurrent open-shop}
(or {\em order scheduling}) for which the same approximation results are known  (see ~\cite{MastrolilliQSSU10} and the references therein).
Extending on the above-mentioned model, Chen et al.~\cite{ChenKL12},
proposed a more realistic one which takes into account the dependencies among Map and Reduce tasks  and derived  an 8-approximation algorithm for the same objective.
Moreover, they managed to model  also the  transfer of the output of Map tasks to Reduce tasks and to derive  a  58-approximation algorithm for this generalization.
In a third model proposed by Moseley et al.~\cite{MoseleyDKS11}, the dependencies between Map and Reduce tasks of a job are also  taken into account while the assignment of tasks to processors is not given in advance.
The authors studied the preemptive variant for both the case of identical and unrelated processors. They proposed constant approximation ratios of 12 and 6, respectively.
For the unrelated processors case, they focused on the special case where each job has a single Map and a single Reduce task.
For the latter case on a single map and a single reduce processor they also proposed a QPTAS which becomes a PTAS for a fixed number of processing times of tasks. Recently, in~\cite{FotakisMZZ13} the authors proposed a $32+\epsilon$-approximation algorithm for the unrelated processors case with multiple Map and Reduce tasks per job.

In the energy-aware setting, Angel et al.~\cite{AngelBK12}
proposed approximation algorithms for the problem of minimizing
the total weighted completion time on unrelated parallel
processors, under a model where the processing time and the energy
consumption of the jobs are speed dependent. Other works in this setting,
related to our problem, deal with single processor problems.
Megow et al. ~\cite{MegowV12} recently proposed a PTAS for the problem of minimizing the total weighted completion time on a single speed-scalable processor.

\paragraph{Our Results and Organization of the Paper}
We adopt the MapReduce model of~\cite{ChangKKLLM11} where the tasks are preassigned to processors
but extended with dependencies between Map and Reduce tasks
as in Chen et al.~\cite{ChenKL12,MoseleyDKS11} in the speed scaling setting \cite{YaoDS95}.
After a formal statement of our problem and notation, we present, in Section~\ref{se:lpa}, a polynomial time LP-based $O(1)$-energy $O(1)$-approximation algorithm which allows energy augmentation, i.e., it may use more energy than an optimal solution which always respects the energy budget, while using discretization of the possible speed values and list scheduling in the order of tasks' $\alpha$-points (see e.g. \cite{PhillipsSW97,HallSW96,Skutella06}).
As we show, there is a tradeoff between the approximation ratio and energy augmentation as a function of $\alpha$, where the schedule is converted to a constant-factor approximation for our problem.
In Section~\ref{se:cp}, we are interested in natural list scheduling policies such as \textsc{First Come First Serve} (\fcfs) and \textsc{Smith Rule} (\sr). However, in our context we need to determine the speeds of every task in order to respect the energy budget. For that, we propose a convex programming relaxation of our problem when an order of the jobs is prespecified.
This relaxation can be solved in polynomial time to arbitrary precision by the Ellipsoid algorithm \cite{NesterovNN94}.
Then we combine the solution of this relaxation with  \fcfs and \sr and  we compare experimentally their effectiveness.
Finally, we conclude in Section~\ref{se:con}.

\section{Problem Definition and Notation}

In the sequel we consider a set $\mathcal{J}=\{1,2,\ldots,n\}$ of $n$ MapReduce jobs to be executed
on a set $\mathcal{P}=\{1,2,\ldots,m\}$ of $m$ speed-scalable processors.
Each job is associated with a positive weight $w_j$ and a release date $r_j$
and consists of a set of Map tasks and a set of Reduce tasks that are preassigned to the $m$ processors.
We denote by $\mathcal{T}$ the set of all tasks of all jobs,
and by $\mathcal{M}$ and $\mathcal{R}$ the sets of all Map and Reduce tasks, respectively.
Each task $T_{i,j} \in \mathcal{T}$ is associated a non-negative work volume $v_{i,j}$.

We consider each job having at least one Map and one Reduce task and
that each job  has at most one task, either Map or Reduce, assigned to each processor.
Map or Reduce tasks can run simultaneously on different processors, while the following precedence constraints hold for each job:
every Reduce task can start its execution after the completion of all Map tasks of the same job.

For a given schedule we denote by  $C_j$ and $C_{i,j}$ the completion times of each
job $j \in \mathcal{J}$ and each  task $T_{i,j} \in \mathcal{T}$, respectively.
Note that, due to the precedence constraints of Map and Reduce tasks, $C_j = \max_{T_{i,j}\in\mathcal{R}} \{C_{i,j}\}$.
By $C_{max}=\max_{j \in \mathcal{J}} \{C_j\}$ we denote the makespan of the schedule, i.e., the completion time of the job which finishes last.
Let also, $w_{\min}=\min_{j \in \mathcal{J}} \{w_j\}$, $v_{\min}=\min_{T_{i,j} \in \mathcal{T}} \{v_{i,j}: v_{i,j}>0\}$,
$w_{\max}=\max_{j \in \mathcal{J}} \{w_j\}$, $r_{\max}=\max_{j \in \mathcal{J}} \{r_j\}$ and $v_{\max}=\max_{T_{i,j} \in \mathcal{T}} \{v_{i,j}\}$.

In this paper, we combine this abstract model for MapReduce scheduling with the speed scaling mechanism for energy saving \cite{YaoDS95}
(see also \cite{Albers11} for a recent review).
In this setting the power required by a processor running at time $t$ with speed  $s(t)$
is equal to $P((s(t))=s(t)^\beta$, for a constant $\beta>1$ (typical values of $\beta$ are between 2 and 3)
and its energy consumption  is power integrated over  time, i.e., $E=\int P(s(t)) dt$.

Due to the convexity of the speed-to-power function,
a key property of our problem is that each task runs at a constant speed during its whole execution.
So, if a task $T_{i,j}$ is executed at a speed $s_{i,j}$, the time needed for
its execution (processing time) is equal to
$p_{i,j}=\frac{v_{i,j}}{s_{i,j}}$ and its energy consumption is
$E_{i,j}= \frac{v_{i,j}}{s_{i,j}}s_{i,j}^{\beta}=v_{i,j}s_{i,j}^{\beta-1}$.

Moreover, we are given an energy budget $E$ and the goal is to schedule \emph{non-preemptively} all the tasks to the $m$ processors,
so as to minimize the total weighted completion time of the schedule, i.e., $\sum_{j\in\mathcal{J}}w_j C_j$, without exceeding the energy budget $E$.
We refer to this problem as \mr problem.

As already mentioned above, the special case of the \mr problem where there are not dependencies between Map and Reduce tasks of each job
and each processor runs at a constant speed, reduces to the \emph{concurrent open-shop} problem which is known to be strongly $\mathcal{NP}$-complete~\cite{Roemer06}.
It is also easy to adapt $\mathcal{NP}$-completeness reductions like the one for the concurrent open-shop problem in the speed scaling setting,
and therefore, the \mr problem is also strongly $\mathcal{NP}$-hard.

\section{A Linear Programming Approach}\label{se:lpa}

In this section we present a constant-factor approximation algorithm for the \mr problem. Our algorithm allows energy augmentation and derives a $O(1)$-energy $O(1)$-approximation schedule for the problem. As we show, there is a tradeoff where the schedule can be converted to a constant-factor approximate schedule for the \mr problem.

Our algorithm is based on a formulation of the problem as a linear programming relaxation.
Then, we transform the solution obtained by the linear program to a feasible schedule for the \mr problem using the technique of $\alpha$-points.

\subsection{Discretization of Speeds}

Before presenting the linear programming formulation,
our first step is to discretize the possible speed values by loosing a factor of $(1+\epsilon)$ with respect to an optimal solution.
In order to do this, we need the following propositions that bound the length of an optimal schedule and the possible speed values.

\begin{proposition}\label{prop:T}
The makespan of any optimal schedule for the \mr problem is at most
\begin{equation*}
t_{\max} = \frac{w_{\max}}{{w_{\min}}}\left(nr_{\max} + n(n+1) \left(\frac{|\mathcal{T}| \cdot v_{\max}^{\beta}}{E}\right)^{\frac{1}{\beta-1}}\right)
\end{equation*}
\end{proposition}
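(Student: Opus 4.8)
The plan is to exhibit one explicit feasible reference schedule $S$, bound its total weighted completion time in closed form, and then exploit optimality: any optimal schedule $S^*$ has objective no larger than that of $S$, and since the last-finishing job carries weight at least $w_{\min}$, its makespan is at most $\tfrac{1}{w_{\min}}$ times the reference objective. This reduces the whole statement to constructing a good reference schedule.

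First I would build $S$ by splitting the energy budget equally, assigning $E/|\mathcal{T}|$ to every task. A task $T_{i,j}$ run at the resulting constant speed then has processing time $p_{i,j}=v_{i,j}^{\beta/(\beta-1)}(|\mathcal{T}|/E)^{1/(\beta-1)}\le \rho$, where $\rho=\left(\frac{|\mathcal{T}|\,v_{\max}^{\beta}}{E}\right)^{1/(\beta-1)}$, and the total energy used is exactly $E$. I process the jobs one at a time in an arbitrary order $\pi(1),\dots,\pi(n)$, running for each job first all its Map tasks in parallel and then all its Reduce tasks in parallel. Since each job has at most one task per processor and distinct jobs never overlap in time, there are no processor conflicts, the Map/Reduce precedences are respected, and no preemption occurs; hence $S$ is feasible. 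Each job's Map phase and Reduce phase last at most $\rho$, so a single job occupies at most $2\rho$ time once started.

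Next I would bound completion times in $S$. By time $r_{\max}$ every job is released, so the remaining work of the first $j$ processed jobs is at most $2j\rho$ and they all finish by $r_{\max}+2j\rho$; thus $C_{\pi(j)}(S)\le r_{\max}+2j\rho$ (and trivially so if $\pi(j)$ finished before $r_{\max}$). Summing with $w_{\pi(j)}\le w_{\max}$ gives
\begin{equation*}
\sum_{j\in\mathcal{J}} w_j C_j(S) \le w_{\max}\sum_{j=1}^{n}\left(r_{\max}+2j\rho\right) = w_{\max}\left(n r_{\max} + n(n+1)\rho\right),
\end{equation*}
using $\sum_{j=1}^{n} 2j = n(n+1)$, which is exactly the coefficient appearing in the target bound.

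Finally, let $S^*$ be any optimal schedule. Optimality gives $\sum_j w_j C_j(S^*)\le \sum_j w_j C_j(S)$, while the job attaining $C_{\max}(S^*)$ has weight at least $w_{\min}$, so $w_{\min}\,C_{\max}(S^*)\le \sum_j w_j C_j(S^*)$. Chaining these inequalities with the display above and substituting $\rho$ yields precisely $C_{\max}(S^*)\le t_{\max}$. I expect the only delicate points to be the equal-energy speed computation (verifying that it indeed gives $p_{i,j}\le\rho$) and the clean justification of $C_{\pi(j)}\le r_{\max}+2j\rho$; everything else is bookkeeping that reproduces the exact $\frac{w_{\max}}{w_{\min}}\left(n r_{\max}+n(n+1)\rho\right)$ expression.
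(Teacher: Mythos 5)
Your proof is correct and follows essentially the same route as the paper: allocate energy $E/|\mathcal{T}|$ per task, schedule the jobs sequentially with Map phase then Reduce phase each lasting at most $\left(\frac{|\mathcal{T}|\,v_{\max}^{\beta}}{E}\right)^{1/(\beta-1)}$, and combine $w_{\min}C_{\max}\leq\sum_j w_j C_j$ with optimality against this reference schedule. The only (cosmetic) difference is that you keep the original work volumes and bound each $p_{i,j}$ by $\rho$ directly, whereas the paper first rounds all works up to $v_{\max}$ and schedules the rounded instance; your variant is, if anything, slightly cleaner.
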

\begin{proof}
Consider an optimal schedule for the \mr problem.
By definition, we have that $C_{\max}=\max_{j \in \mathcal{J}}\{C_j\}$.
Hence, it holds that $w_{\min} C_{\max} \leq \sum_{j \in \mathcal{J}} w_j C_j$.

In order to give an upper bound to $\sum_{j \in \mathcal{J}} w_j C_j$,
consider an instance of our problem where the weight $w_j$ and the release date $r_j$
of each job $j \in \mathcal{J}$ are rounded up to $w_{\max}$ and $r_{\max}$, respectively.
Moreover, assume that in this instance all tasks have work equal to $v_{\max}$.

Consider now an arbitrary order $\{1,2,\ldots,n\}$ of the jobs.
We create a feasible schedule $S$ for the modified instance as follows.
All tasks run with the same speed $s=\left(\frac{E}{|\mathcal{T}|\cdot v_{\max}}\right)^{1/(\beta-1)}$,
hence each task has a processing time $p=\frac{v_{\max}}{s}$.
Note that this speed allows us to execute all tasks without exceeding the energy budget.
As all tasks have the same processing time, we can consider the time horizon partitioned into time slots of length $p$ starting from $r_{\max}$.
For each job $j$, $1 \leq j \leq n$, we execute its Map tasks at time $r_{\max}+(2j-2)p$
and its Reduce tasks at time $r_{\max}+(2j-1)p$.
Then, for the objective value $\sum_{j \in \mathcal{J}} w_{\max} C_j^S$ of this schedule it holds that
\begin{eqnarray*}
\sum_{j \in \mathcal{J}} w_{\max} C_j^S
& = & w_{\max} \sum_{j=1}^n (r_{\max} + 2jp)\\
& = & w_{\max}\left(n r_{\max} + n(n+1)\frac{v_{\max}}{s}\right)
\end{eqnarray*}
The objective value of schedule $S$ is clearly an upper bound on the objective value
$\sum_{j \in \mathcal{J}} w_j C_j$ of an optimal schedule for the initial instance and the proposition follows.
\qed
\end{proof}

\begin{proposition}\label{prop:speedbound}
For the speed $s_{i,j}$ of any task $T_{i,j} \in \mathcal{T}$ in the optimal schedule it holds that
\begin{equation*}
\frac{v_{i,j}}{t_{\max}}\leq s_{i,j} \leq \left(\frac{E}{v_{i,j}}\right)^{\frac{1}{\beta-1}}
\end{equation*}
\end{proposition}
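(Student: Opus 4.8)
The plan is to establish the two inequalities independently, each from an elementary feasibility argument applied to a fixed optimal schedule. For the upper bound I would exploit the energy budget. Since every task runs at a constant speed (the convexity property recorded above), the energy spent on task $T_{i,j}$ alone is exactly $E_{i,j} = v_{i,j} s_{i,j}^{\beta-1}$. As all per-task energy contributions are non-negative and their sum over all tasks cannot exceed $E$, a single task cannot by itself consume more than the whole budget, so $v_{i,j} s_{i,j}^{\beta-1} \leq E$. Solving for $s_{i,j}$ and using $\beta > 1$ yields $s_{i,j} \leq \left(E/v_{i,j}\right)^{1/(\beta-1)}$ directly.

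For the lower bound I would argue via the makespan. The processing time of $T_{i,j}$ is $p_{i,j} = v_{i,j}/s_{i,j}$, and since the task is executed in its entirety within the schedule, its processing time cannot exceed its own completion time $C_{i,j}$, and hence cannot exceed the makespan $C_{\max}$. Invoking Proposition~\ref{prop:T}, in the optimal schedule we have $C_{\max} \leq t_{\max}$, so that $v_{i,j}/s_{i,j} \leq t_{\max}$, which rearranges to $s_{i,j} \geq v_{i,j}/t_{\max}$.

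No genuine obstacle arises; the only point requiring attention is that both inequalities must refer to the same optimal schedule, so I would fix one optimal schedule at the outset and derive both bounds from it. Beyond the makespan estimate supplied by Proposition~\ref{prop:T}, the argument uses nothing more than the definitions of processing time and per-task energy consumption.
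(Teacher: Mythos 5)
Your proposal is correct and matches the paper's own argument exactly: the lower bound follows from $p_{i,j}=v_{i,j}/s_{i,j}\leq C_{\max}\leq t_{\max}$ via Proposition~\ref{prop:T}, and the upper bound from $E_{i,j}=v_{i,j}s_{i,j}^{\beta-1}\leq E$. No differences worth noting.
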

\begin{proof}
The processing time $p_{i,j}$ of a task $T_{i,j} \in
\mathcal{T}$ in an  optimal schedule cannot exceed the maximum
completion time, that is  $p_{i,j}=\frac{v_{i,j}}{s_{i,j}} \leq
C_{max}$ and,  since by Proposition~\ref{prop:T} it holds that
$C_{\max} \leq t_{\max}$,  the lower bound follows.

The energy consumption of any task cannot exceed the energy
budget, that is $E_{i,j}=v_{i,j} s_{i,j}^{\beta-1} \leq E$ and
the upper bound follows.
\qed
\end{proof}

Let $s_L=\frac{v_{\min}}{t_{\max}}$ and $s_U=\left(\frac{E}{v_{\min}}\right)^{1/(\beta-1)}$ be
an upper and a lower bound, respectively, on the speed of any task.
Given these bounds, we discretize the interval $[s_L,s_U]$ geometrically.
In other words, we assume that the processors can only run according to one of the following speeds:
$s_L,s_L(1+\epsilon),s_L(1+\epsilon)^2,\ldots,s_L(1+\epsilon)^k$,
where $k$ is the smallest integer such that $s_L(1+\epsilon)^k \geq s_U$.
Note that $k=\lceil\log_{1+\epsilon}\frac{s_U}{s_L}\rceil$ and hence the number of possible speeds is polynomial to the size of the instance and to $1/\epsilon$.
We denote by $\mathcal{V}=\{s_L(1+\epsilon)^{\ell}|\epsilon>0, 0 \leq \ell \leq k\}$ the set of all possible discrete speed values.
Let also $s_{\max}=s_L(1+\epsilon)^k$.

\begin{lemma}\label{le:discretespeeds}
There is a feasible $(1+\epsilon)$-approximate schedule for the \mr problem in which each task $T_{i,j} \in \mathcal{T}$ runs at a speed $s \in \mathcal{V}$.
\end{lemma}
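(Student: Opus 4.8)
The plan is to start from an arbitrary optimal schedule $S^*$ for the \mr problem and to perturb it into a schedule $S'$ that uses only speeds from $\mathcal{V}$, while losing at most a factor $(1+\epsilon)$ in the objective and never exceeding the energy budget. First I would recall from Proposition~\ref{prop:speedbound} that every task speed $s_{i,j}$ in $S^*$ lies in $[s_L,s_U]$, hence inside the geometric grid $[s_L,s_{\max}]$ spanned by $\mathcal{V}$. For each task $T_{i,j}$ I would round its speed \emph{down} to the largest grid value $s'_{i,j}\in\mathcal{V}$ with $s'_{i,j}\le s_{i,j}$; since consecutive grid points differ by the factor $(1+\epsilon)$, this yields $\frac{s_{i,j}}{1+\epsilon}\le s'_{i,j}\le s_{i,j}$.

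Two consequences follow immediately. On the energy side, the per-task energy $v_{i,j}s_{i,j}^{\beta-1}$ is increasing in the speed (as $\beta>1$) and $s'_{i,j}\le s_{i,j}$, so the total energy of $S'$ does not exceed that of $S^*$ and the budget $E$ is still respected. On the time side, the processing time of each task grows by at most $(1+\epsilon)$, because $p'_{i,j}=\frac{v_{i,j}}{s'_{i,j}}\le(1+\epsilon)\frac{v_{i,j}}{s_{i,j}}=(1+\epsilon)p_{i,j}$.

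To turn the slower speeds into a valid schedule, I would keep the combinatorial structure of $S^*$ and simply \emph{stretch the time axis} by $(1+\epsilon)$: a task starting at time $a$ in $S^*$ is scheduled to start at time $(1+\epsilon)a$ in $S'$ and to run for $p'_{i,j}$ time units. The key point is that this placement stays feasible. Because start times are only shifted later, every release date $r_j$ is still respected; because the axis is scaled uniformly, the execution order on each processor is preserved, and an interval $[(1+\epsilon)a,(1+\epsilon)b]$ that was occupied in $S^*$ now contains the stretched task (using $p'_{i,j}\le(1+\epsilon)(b-a)$), so no two tasks on a processor overlap; and since each Reduce task started no earlier than the last Map completion of its job in $S^*$, the same inequality survives the scaling, preserving the Map/Reduce precedences.

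Finally I would bound the objective. For every task, $C'_{i,j}=(1+\epsilon)a_{i,j}+p'_{i,j}\le(1+\epsilon)(a_{i,j}+p_{i,j})=(1+\epsilon)C^*_{i,j}$, whence $C'_j\le(1+\epsilon)C^*_j$ and $\sum_{j}w_jC'_j\le(1+\epsilon)\sum_{j}w_jC^*_j$, giving the claimed $(1+\epsilon)$-approximation. I expect the only delicate part to be the feasibility check of the time-stretch, namely arguing simultaneously that non-overlap, precedences, and release dates all survive the scaling; the energy and objective bounds are then routine consequences of the geometric discretization.
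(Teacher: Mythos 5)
Your proposal is correct and follows essentially the same route as the paper: round each task's speed down to the nearest grid point of $\mathcal{V}$ (valid by Proposition~\ref{prop:speedbound}), observe that energy only decreases while each processing time grows by at most $(1+\epsilon)$, and conclude that the objective grows by at most $(1+\epsilon)$. The only difference is that you make explicit the time-stretching construction and the feasibility checks (non-overlap, release dates, Map/Reduce precedences) that the paper's one-line argument leaves implicit; this is a welcome addition but not a different proof.
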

\begin{proof}
Let an optimal schedule for our problem and consider the speed of each task $T_{i,j} \in \mathcal{T}$
rounded down to the closest $s_L(1+\epsilon)^{\ell}$ value.
As the speeds are decreased, the energy consumption of $\mathcal{S}$ does not exceed $E$.
Moreover, the execution time of all tasks, and hence the completion time of every job
and the optimal objective value increase by a factor at most $(1+\epsilon)$. \qed
\end{proof}

Henceforth we will consider the \mr problem in which each task $T_{i,j} \in \mathcal{T}$ runs at a single speed $s \in \mathcal{V}$.
We call this version of the problem \dmr.

\subsection{Linear Programming Relaxation}\label{se:lp}

In what follows we give an interval-indexed linear programming relaxation of the \dmr problem.
In order to do this, we discretize the time horizon of an optimal schedule as follows.
By Proposition~\ref{prop:T}, in any optimal schedule, all jobs are executed during the interval $(0,t_{\max}]$.
We partition $(0,t_{\max}]$ into the intervals
$(0,\lambda],(\lambda,\lambda(1+\delta)],(\lambda(1+\delta),\lambda(1+\delta)^2],\ldots,(\lambda(1+\delta)^{u-1},\lambda(1+\delta)^u]$,
where $\delta>0$ is a small constant, $\lambda>0$ is a constant that we will define later,
and $u$ is the smallest integer such that $\lambda(1+\delta)^{u-1} \geq t_{\max}$.
Let $\tau_0=0$ and $\tau_t=\lambda(1+\delta)^{t-1}$, for $1 \leq t \leq u+1$.
Moreover, let $I_t=(\tau_t,\tau_{t+1}]$, for $0 \leq t \leq u$, and
$|I_t|$ be the length of the interval $I_t$, i.e., $|I_0|=\lambda$ and $|I_t|=\lambda\delta(1+\delta)^{t-1}$, $1 \leq t \leq u$.
Note that, the number of intervals is polynomial to the size of the instance and to $1/\delta$, as $u=\lceil\log_{1+\delta}\frac{t_{\max}}{\lambda}\rceil+1$.

Let $p_{i,j,s} = \frac{v_{i,j}}{s}$ be the potential processing time for each task $T_{i,j} \in \mathcal{T}$
if it is executed entirely with speed $s \in \mathcal{V}$.
For each $T_{i,j}\in \mathcal{T}$, $t \in \{0,1,\ldots,u\}$ and $s\in \mathcal{V}$,
we introduce a variable $y_{i,j,s,t}$ that corresponds to the portion of the interval $I_t$ during which the task $T_{i,j}$ is executed with speed $s$.
In other words, $y_{i,j,s,t}|I_t|$ is the time that task $T_{i,j}$ is executed within the interval $I_t$ at speed $s$,
or equivalently, $\frac{y_{i,j,s,t}|I_t|}{p_{i,j,s}}$ is the fraction of the task $T_{i,j}$ that is executed within $I_t$ at speed $s$.
Note that the number of $y_{i,j,s,t}$ variables is polynomial to the size of the instance, to $1/\epsilon$ and to $1/\delta$.
Furthermore, for each task $T_{i,j} \in \mathcal{T}$, we introduce a variable $C_{i,j}$, which corresponds to the completion time of $T_{i,j}$.
Finally, let $C_j$, $j\in \mathcal{J}$, be the variable that corresponds to the completion time of job $j$.
(LP) is a linear programming relaxation of the \dmr problem.
\begin{figure}
\begin{alignat}{2}
 & (LP): \text{minimize} \sum_{j\in \mathcal{J}} w_j C_j \notag\\
 & \text{subject to}: \notag\\
 & \sum_{s\in \mathcal{V}} \sum_{t=0}^u \frac{y_{i,j,s,t}|I_t|}{p_{i,j,s}} = 1, ~~~~~~~~~~~~~~~~~~~~~~~ \forall T_{i,j}\in \mathcal{T} \label{lp:p1}\\
 & \sum_{j:T_{i,j}\in \mathcal{T}}\sum_{s\in \mathcal{V}} y_{i,j,s,t} \leq 1, ~~~~~~~~~~~~~ \forall i\in \mathcal{P}, 0 \leq t \leq u \label{lp:p2}\\
 & C_{i,j} \geq \frac{1}{2} \sum_{s\in \mathcal{V}} y_{i,j,s,0}|I_0| \left(\frac{1}{p_{i,j,s}} + 1 \right) + \notag\\
 & ~~ \sum_{t=1}^u \sum_{s\in \mathcal{V}} \left(\frac{y_{i,j,s,t}|I_t|}{p_{i,j,s}}\tau_t + \frac{1}{2} y_{i,j,s,t}|I_t|\right), ~~~ \forall T_{i,j}\in \mathcal{T} \label{lp:p3}\\
 & C_{j} \geq C_{i,j}, ~~~~~~~~~~~~~~~~~~~~~~~~~~~~~~~~~~~~~~ \forall T_{i,j}\in \mathcal{T} \label{lp:p4}\\
 & \sum_{T_{i,j}\in\mathcal{T}}\sum_{s\in\mathcal{V}}\sum_{t=0}^u y_{i,j,s,t}|I_t| s^{\beta} \leq E \label{lp:p5}\\
 & \sum_{t=0}^{\ell} \sum_{s \in \mathcal{V}} \frac{y_{i,j,s,t}|I_t|}{p_{i,j,s}} \geq \sum_{t=0}^{\ell} \sum_{s \in \mathcal{V}} \frac{y_{i',j,s,t}|I_t|}{p_{i',j,s}}, \notag\\
 & ~~~~~~~~~~~~~~~~~~~~~~~~~ \forall T_{i,j} \in \mathcal{M}, T_{i',j} \in \mathcal{R}, 0 \leq \ell \leq u\label{lp:p6}\\
 & y_{i,j,s,t} = 0, ~~~~~~~~~~~~~~~~ \forall T_{i,j} \in \mathcal{T}, s \in \mathcal{V}, t:\tau_t<r_j \label{lp:p7}\\
 & y_{i,j,s,t}, C_{i,j}, C_j \geq 0, ~~~~~~ \forall T_{i,j} \in \mathcal{T}, s \in \mathcal{V}, 0 \leq t \leq u
\end{alignat}
\end{figure}

Our objective is to minimize the sum of weighted completion times of all jobs.
For each task $T_{i,j} \in \mathcal{T}$, the corresponding constraint~(\ref{lp:p1}) ensures that $T_{i,j}$ is entirely executed.
Constraints~(\ref{lp:p2}) enforce that the total amount of processing time that is executed within an interval $I_t$ cannot exceed its length.
In~\cite{SchulzS02}, the authors proposed a lower bound for the completion time of a job.
This lower bound can be adapted to our problem and for the completion time
of a task $T_{i,j} \in \mathcal{T}$ leads to a corresponding constraint~(\ref{lp:p3}).
Constraints~(\ref{lp:p4}) ensure that the completion time of each job is the maximum over the completion times of all its tasks.
Constraint~(\ref{lp:p5}) ensures that the given energy budget is not exceeded.
Note that the value $s^{\beta}$ for each $s \in \mathcal{V}$ is a fixed number.
Constraints~(\ref{lp:p6}) imply the precedence constraints between the Map and the Reduce tasks of the same job,
as they enforce that the fraction of a Map task that is executed up to each time point
should be at least the fraction of a Reduce task of the same job executed up to the same time point;
hence, each Map task completes before all Reduce tasks of the same job.
Constraints~(\ref{lp:p7}) do not allow tasks of a job to be executed before their release date.


In what follows, we denote an optimal solution to (LP) by $(\bar{y}_{i,j,s,t}, \bar{C}_{i,j}, \bar{C}_j)$.

\subsection{The Algorithm}

In this section we use (LP) to derive a feasible schedule for the \dmr problem.
Depending on the choice of some parameters, this schedule may exceed the energy budget.
As we show, there is a tradeoff where a constant factor approximation ratio can be derived.

Our algorithm is based on the idea of list scheduling in order of $\alpha$-points~\cite{HallSSW97}.
In general, an $\alpha$-point of a job is the first point in time
where an $\alpha$-fraction of the job has been completed, where $\alpha \in (0,1)$ is a constant that depends on the analysis.
In this paper, we will define the $\alpha$-point $t_{i,j}^{\alpha}$ of a task $T_{i,j} \in \mathcal{T}$ as
the minimum $\ell$, $0 \leq \ell \leq u$, such that at least an $\alpha$-fraction of $v_{i,j}$ is accomplished up to the interval $I_{\ell}$ to (LP), i.e.,
\begin{equation*}
t_{i,j}^{\alpha} = \min\Bigg\{\ell: \sum_{t=0}^{\ell} \sum_{s \in \mathcal{S}} \frac{\bar{y}_{i,j,s,t} |I_t|}{p_{i,j,s}} \geq \alpha \Bigg\}\label{eq:a-point}
\end{equation*}
Thus, once our algorithm has computed an optimal solution $(\bar{y}_{i,j,s,t}$, $\bar{C}_{i,j}$, $\bar{C}_j)$ to (LP),
it calculates the corresponding $\alpha$-point, $t_{i,j}^{\alpha}$, for each task $T_{i,j}\in \mathcal{T}$.
Then, combining the ideas of~\cite{ChenKL12} with the notion of $\alpha$-points~\cite{HallSSW97}, we create a feasible schedule as follows:
For each processor $i \in \mathcal{P}$, we consider a priority list $\sigma_i$ of its tasks such that tasks with smaller $\alpha$-point have higher priority.
A crucial point in our analysis is that we consider that a task $T_{i,j} \in \mathcal{T}$
becomes \emph{available} for the algorithm after the time $\tau_{t_{i,j}^{\alpha}+1} > r_j$.
Moreover, if $T_{i,j} \in \mathcal{R}$ then we need also all tasks $T_{i',j} \in \mathcal{M}$ to be completed in order $T_{i,j}$ to be considered as available.
For each task $T_{i,j} \in \mathcal{T}$, we use a constant speed $s_{i,j}=\frac{v_{i,j}}{p_{i,j}}$, where
\begin{equation*}
p_{i,j} = \gamma \sum_{t=0}^{t_{i,j}^{\alpha}} \sum_{s\in\mathcal{V}} \bar{y}_{i,j,s,t}|I_t|
\end{equation*}
is the processing time of $T_{i,j}$ used by our algorithm,
and $\gamma>0$ is a constant that we define later and describes the tradeoff between the energy consumption and the weighted completion time of jobs.
At each time point where a processor $i \in \mathcal{P}$ is available,
our algorithm selects the highest priority available task in $\sigma_i$ which has not been yet executed.
Note that our algorithm always create a feasible solution as we do not insist on selecting the highest priority task if this is not available.
\algomr gives a formal description of our algorithm.

\begin{algorithm}
\algomr
\begin{algorithmic}[1]
\STATE Compute an optimal solution $(\bar{y}_{i,j,s,t}, \bar{C}_{i,j}, \bar{C}_j)$ to $(LP)$.
\FOR {each task $T_{i,j}\in\mathcal{T}$}
\STATE Compute the $\alpha$-point $\displaystyle t_{i,j}^{\alpha}$, the processing time $p_{i,j}$ and the speed $s_{i,j}$.
\ENDFOR
\FOR {each processor $i \in \mathcal{P}$}
\STATE Compute the priority list $\sigma_i$.
\ENDFOR
\FOR {each time where a processor $i\in \mathcal{P}$ becomes available}
\STATE Select the first available task, let $T_{i,j}$, in $\sigma_i$ which has not been yet executed.
\STATE Schedule $T_{i,j}$, non-preemptively, with processing time $p_{i,j}$.\\Let $C_{i,j}$ be the completion time of task $T_{i,j}$.
\ENDFOR
\FOR {each job $j\in \mathcal{J}$}
\STATE Compute its completion time $C_j = \max_{i\in\mathcal{P}}C_{i,j}$.
\ENDFOR
\end{algorithmic}
\end{algorithm}

Note that the processing time of a task $T_{i,j} \in \mathcal{T}$ to an optimal solution to (LP) is
\begin{equation*}
\bar{p}_{i,j}=\sum_{t=0}^u \sum_{s\in\mathcal{V}} \bar{y}_{i,j,s,t}|I_t|
\end{equation*}
Hence, the energy consumption $\bar{E}_{i,j}=\sum_{s\in\mathcal{V}}\sum_{t=0}^u \bar{y}_{i,j,s,t}|I_t| s^{\beta}$
for the execution of $T_{i,j}$ to an optimal solution to (LP) may be smaller or bigger than
the energy consumption $E_{i,j}$ for the execution of $T_{i,j}$ by the algorithm.
In order to give the relation between these two quantities, we need the following technical lemma.

\begin{lemma} \label{le:technical}
Let $s_1,s_2,\ldots,s_k$ and $a_1,a_2,\ldots,a_k$ be positive values and $\beta>2$.
Then, it holds that
\begin{equation*}
\left(\frac{1}{\sum_{i=1}^ka_i\frac{1}{s_i}}\right)^{\beta-1}\leq \frac{\sum_{i=1}^ka_is_i^{\beta-1}}{\left(\sum_{i=1}^ka_i\right)^{\beta}}
\end{equation*}
\end{lemma}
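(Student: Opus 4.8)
The plan is to first strip away the reciprocals by rewriting the claim in a product form, and then recognize that form as a single application of H\"older's inequality. Clearing the denominator of the left-hand side, the statement is equivalent to
\begin{equation*}
\left(\sum_{i=1}^k a_i\right)^{\beta} \leq \left(\sum_{i=1}^k a_i s_i^{\beta-1}\right)\left(\sum_{i=1}^k \frac{a_i}{s_i}\right)^{\beta-1},
\end{equation*}
so it suffices to prove this version, after which dividing through by $\left(\sum_i a_i/s_i\right)^{\beta-1}\left(\sum_i a_i\right)^{\beta}$ recovers the stated inequality.

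First I would apply H\"older with the conjugate exponents $p=\beta$ and $q=\beta/(\beta-1)$, which satisfy $1/p+1/q=1$ and both exceed $1$ since $\beta>2$. The key step is choosing the splitting $x_i = a_i^{1/\beta}\, s_i^{(\beta-1)/\beta}$ and $y_i = a_i^{(\beta-1)/\beta}\, s_i^{-(\beta-1)/\beta}$, so that $x_i^{\beta}=a_i s_i^{\beta-1}$, $y_i^{\,\beta/(\beta-1)}=a_i/s_i$, and the powers of $s_i$ cancel, giving $x_i y_i = a_i$. Then H\"older yields $\sum_i a_i = \sum_i x_i y_i \leq \left(\sum_i x_i^{\beta}\right)^{1/\beta}\left(\sum_i y_i^{\,q}\right)^{1/q}$, and raising both sides to the power $\beta$ is exactly the product form above.

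The only genuine obstacle is spotting the correct conjugate exponents and the decomposition of $a_i$ that makes the $s_i$ powers cancel; once these are chosen the rest is a one-line verification. As a sanity check, an alternative argument avoids H\"older by normalising: setting $w_i = a_i/\sum_j a_j$ reduces the claim to $\left(\sum_i w_i s_i^{\beta-1}\right)\left(\sum_i w_i/s_i\right)^{\beta-1} \geq 1$. Here convexity of $x\mapsto x^{\beta-1}$ (valid because $\beta-1\geq 1$) gives $\sum_i w_i s_i^{\beta-1} \geq \left(\sum_i w_i s_i\right)^{\beta-1}$ by Jensen, while Cauchy--Schwarz gives $\left(\sum_i w_i s_i\right)\left(\sum_i w_i/s_i\right)\geq \left(\sum_i w_i\right)^2 = 1$; multiplying these two bounds closes the argument. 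Both routes use only $\beta>1$, so the hypothesis $\beta>2$ is more than sufficient, which also serves as a useful correctness check on the computation.
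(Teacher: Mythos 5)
Your proof is correct, and your primary route is genuinely different from the one in the paper. The paper first rewrites the claim as $\left(\sum_i a_i \big/ \sum_i a_i s_i^{-1}\right)^{\beta-1}\leq \sum_i a_i s_i^{\beta-1}\big/\sum_i a_i$, applies Jensen's inequality to the convex function $x\mapsto x^{\beta-1}$, and then reduces the remaining step to $\left(\sum_i a_i\right)^2\leq\left(\sum_i a_i s_i\right)\left(\sum_i a_i s_i^{-1}\right)$, which it verifies by expanding and using $s_i/s_j+s_j/s_i\geq 2$. That is exactly your ``sanity check'' route in normalized form; your Cauchy--Schwarz step is the paper's hand-expanded inequality. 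Your main argument --- a single application of H\"older with conjugate exponents $\beta$ and $\beta/(\beta-1)$ and the splitting $x_iy_i=a_i$ chosen so the powers of $s_i$ cancel --- is shorter and strictly more general, since it needs only $\beta>1$, whereas the Jensen step requires $x\mapsto x^{\beta-1}$ to be convex, i.e.\ $\beta\geq 2$. One small correction: your closing remark that ``both routes use only $\beta>1$'' is not quite right for the second route, because for $1<\beta<2$ the map $x\mapsto x^{\beta-1}$ is concave and the Jensen step reverses; this is immaterial under the lemma's hypothesis $\beta>2$, but only the H\"older route actually extends to all $\beta>1$.
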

\begin{proof}
The expression of the statement can be written equivalently as follows.
\begin{equation}
\label{Eq:Technical}
\left(\frac{\sum_{i=1}^ka_i}{\sum_{i=1}^ka_i\frac{1}{s_i}}\right)^{\beta-1}\leq \frac{\sum_{i=1}^ka_is_i^{\beta-1}}{\sum_{i=1}^ka_i}
\end{equation}
Note that the function $f(x)=x^{\beta-1}$ is convex for $\beta>2$.
Thus, by the Jensen's inequality we have that
\begin{equation*}
f\left(\frac{\sum_{i=1}^ka_is_i}{\sum_{i=1}^ka_i}\right)\leq \frac{\sum_{i=1}^ka_if(s_i)}{\sum_{i=1}^ka_i}
\end{equation*}
which is translated as
\begin{equation*}
\left(\frac{\sum_{i=1}^ka_is_i}{\sum_{i=1}^ka_i}\right)^{\beta-1}\leq \frac{\sum_{i=1}^ka_is_i^{\beta-1}}{\sum_{i=1}^ka_i}
\end{equation*}
Therefore, in order to show inequality (\ref{Eq:Technical}), it suffices to show that
\begin{equation*}
\left(\frac{\sum_{i=1}^ka_i}{\sum_{i=1}^ka_i\frac{1}{s_i}}\right)^{\beta-1}\leq \left(\frac{\sum_{i=1}^ka_is_i}{\sum_{i=1}^ka_i}\right)^{\beta-1}
\end{equation*}
Thus, it suffices to prove that
\begin{equation*}
\frac{\sum_{i=1}^ka_i}{\sum_{i=1}^ka_i\frac{1}{s_i}}\leq \frac{\sum_{i=1}^ka_is_i}{\sum_{i=1}^ka_i}
\end{equation*}
An equivalent representation of the above expression is
\begin{eqnarray*}
\left(\sum_{i=1}^ka_i\right)^2\leq\left(\sum_{i=1}^ka_is_i\right) \left(\sum_{i=1}^ka_i\frac{1}{s_i}\right)\Leftrightarrow \\
\sum_{i=1}^ka_i^2+\sum_{i,j=1,\;i\neq j}^k2a_ia_j\leq \sum_{i=1}^ka_i^2 \sum_{i,j=1,\;i\neq j}^ka_ia_j\left(\frac{s_i}{s_j}+\frac{s_j}{s_i}\right)
\end{eqnarray*}
The last inequality is always true, as
\begin{equation*}
2\leq\frac{s_i}{s_j}+\frac{s_j}{s_i}\Leftrightarrow2\leq\frac{s_i^2+s_j^2}{s_is_j} \Leftrightarrow 0\leq(s_i-s_j)^2
\end{equation*}
and hence the lemma follows.\qed
\end{proof}

\begin{lemma}\label{le:speed}
Let $\bar{E}_{i,j}$ and $E_{i,j}$ be the energy consumption of the task $T_{i,j}\in \mathcal{T}$
to the optimal solution to (LP) and to the solution of \algomr, respectively.
It holds that
\begin{equation*}
E_{i,j}\leq \frac{1}{\gamma^{\beta-1}\alpha^{\beta}} \bar{E}_{i,j}
\end{equation*}
\end{lemma}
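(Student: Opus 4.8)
The plan is to reduce the inequality to the pointwise estimate of Lemma~\ref{le:technical}, applied to the work that the LP performs on $T_{i,j}$ up to its $\alpha$-point. First I would put both energies in closed form. The algorithm runs $T_{i,j}$ at the single speed $s_{i,j}=v_{i,j}/p_{i,j}$, so $E_{i,j}=v_{i,j}s_{i,j}^{\beta-1}=v_{i,j}^{\beta}/p_{i,j}^{\beta-1}$; hence the claim is equivalent to $\gamma^{\beta-1}\alpha^{\beta}v_{i,j}^{\beta}/p_{i,j}^{\beta-1}\leq\bar{E}_{i,j}$, and the only algorithmic quantity to control is $p_{i,j}=\gamma\sum_{t=0}^{t_{i,j}^{\alpha}}\sum_{s\in\mathcal{V}}\bar{y}_{i,j,s,t}|I_t|$, namely $\gamma$ times the LP processing time accumulated up to the $\alpha$-point.

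Next I would instantiate Lemma~\ref{le:technical} over the pairs $(s,t)$ with $s\in\mathcal{V}$, $0\leq t\leq t_{i,j}^{\alpha}$ and $\bar{y}_{i,j,s,t}>0$, taking the speed of a pair to be $s$ and its weight to be the LP volume processed there, $a_{s,t}=s\,\bar{y}_{i,j,s,t}|I_t|$. With this choice the three aggregates in the lemma become natural accumulated quantities: $\sum_{s,t\leq t_{i,j}^{\alpha}} a_{s,t}/s=\sum_{s,t\leq t_{i,j}^{\alpha}}\bar{y}_{i,j,s,t}|I_t|=p_{i,j}/\gamma$ is the accumulated time, $\sum_{s,t\leq t_{i,j}^{\alpha}} a_{s,t}=\sum_{s,t\leq t_{i,j}^{\alpha}} s\,\bar{y}_{i,j,s,t}|I_t|$ is the accumulated volume, and $\sum_{s,t\leq t_{i,j}^{\alpha}} a_{s,t}s^{\beta-1}=\sum_{s,t\leq t_{i,j}^{\alpha}}\bar{y}_{i,j,s,t}|I_t|s^{\beta}$ is the accumulated energy. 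The lemma then gives $\left(\gamma/p_{i,j}\right)^{\beta-1}\leq\frac{\sum_{s,t\leq t_{i,j}^{\alpha}}\bar{y}_{i,j,s,t}|I_t|s^{\beta}}{\left(\sum_{s,t\leq t_{i,j}^{\alpha}}s\,\bar{y}_{i,j,s,t}|I_t|\right)^{\beta}}$.

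Then I would loosen the right-hand side to global bounds. The numerator is the energy spent on $T_{i,j}$ only up to the $\alpha$-point, so it is at most the total LP energy $\bar{E}_{i,j}=\sum_{s\in\mathcal{V}}\sum_{t=0}^{u}\bar{y}_{i,j,s,t}|I_t|s^{\beta}$. For the denominator I would invoke the defining property of the $\alpha$-point: since $\frac{\bar{y}_{i,j,s,t}|I_t|}{p_{i,j,s}}=\frac{s\,\bar{y}_{i,j,s,t}|I_t|}{v_{i,j}}$, the fraction of $T_{i,j}$ completed by interval $I_{t_{i,j}^{\alpha}}$ equals $\frac{1}{v_{i,j}}\sum_{s,t\leq t_{i,j}^{\alpha}}s\,\bar{y}_{i,j,s,t}|I_t|\geq\alpha$, so the accumulated volume is at least $\alpha v_{i,j}$ and its $\beta$-th power is at least $\alpha^{\beta}v_{i,j}^{\beta}$. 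Substituting both bounds yields $\left(\gamma/p_{i,j}\right)^{\beta-1}\leq\bar{E}_{i,j}/(\alpha^{\beta}v_{i,j}^{\beta})$, which rearranges to exactly $E_{i,j}\leq\frac{1}{\gamma^{\beta-1}\alpha^{\beta}}\bar{E}_{i,j}$.

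The one point that needs care is the substitution into Lemma~\ref{le:technical}: the weights $a_{s,t}$ must be volumes rather than times or LP fractions, precisely so that dividing by $s$ recovers times (matching $p_{i,j}/\gamma$), summing recovers the fraction controlled by the $\alpha$-point, and multiplying by $s^{\beta-1}$ recovers energy, all at once. Once these three roles are aligned the remainder is monotone substitution; I would also remark that only pairs with $\bar{y}_{i,j,s,t}>0$ are retained, so that all weights and speeds are strictly positive and the hypotheses of the technical lemma are met.
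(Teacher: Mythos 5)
Your proof is correct and follows essentially the same route as the paper: both reduce the claim to Lemma~\ref{le:technical} applied to the per-$(s,t)$ contributions up to the $\alpha$-point, bound the resulting numerator by the total LP energy $\bar{E}_{i,j}$, and use the defining inequality of the $\alpha$-point to lower-bound the denominator by $\alpha^{\beta}$ (times $v_{i,j}^{\beta}$ in your normalization). The only difference is that you take the weights $a_{s,t}$ to be volumes $s\,\bar{y}_{i,j,s,t}|I_t|$ while the paper uses the completed fractions $\bar{y}_{i,j,s,t}|I_t|/p_{i,j,s}$; these differ by the constant factor $v_{i,j}$ and the technical lemma is homogeneous in the $a_i$, so the two instantiations are equivalent.
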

\begin{proof}
By the definition of $E_{i,j}$ we have that
\begin{eqnarray*}
E_{i,j} & = & v_{i,j}s_{i,j}^{\beta-1}
 = v_{i,j} \left(\frac{v_{i,j}}{p_{i,j}}\right)^{\beta-1}\\
 & = & v_{i,j}\left(\frac{v_{i,j}}{\gamma \sum_{s\in\mathcal{V}}\sum_{t=0}^{t_{i,j}^{\alpha}}\bar{y}_{i,j,s,t}|I_t|}\right)^{\beta-1}
\end{eqnarray*}
Since for each speed $s\in\mathcal{V}$, $p_{i,j,s} = \frac{v_{i,j}}{s}$, the above equality can be written as
\begin{equation*}
E_{i,j} = \frac{v_{i,j}}{\gamma^{\beta-1}}\left(\frac{1}{\sum_{s\in\mathcal{V}}\frac{1}{s}\sum_{t=0}^{t_{i,j}^{\alpha}}\frac{\bar{y}_{i,j,s,t}|I_t|}{p_{i,j,s}}}\right)^{\beta-1}
\end{equation*}
Hence, by using Lemma~\ref{le:technical} we get
\begin{equation*}
E_{i,j} \leq \frac{v_{i,j}}{\gamma^{\beta-1}} \cdot \frac{\sum_{s\in\mathcal{V}}s^{\beta-1}\sum_{t=0}^{t_{i,j}^{\alpha}}\frac{\bar{y}_{i,j,s,t}|I_t|}{p_{i,j,s}}}{\left(\sum_{s\in\mathcal{V}}\sum_{t=0}^{t_{i,j}^{\alpha}}\frac{\bar{y}_{i,j,s,t}|I_t|}{p_{i,j,s}}\right)^{\beta}}
\end{equation*}
By the definition of $\alpha$-points we have that
$\sum_{t=0}^{t_{i,j}^{\alpha}} \sum_{s\in \mathcal{V}}\frac{\bar{y}_{i,j,s,t}|I_t|}{p_{i,j,s}} \geq \alpha$, and thus
\begin{eqnarray*}
E_{i,j} & \leq & \frac{1}{\gamma^{\beta-1}\alpha^{\beta}} \sum_{s\in\mathcal{V}}s^{\beta-1}\sum_{t=0}^{t_{i,j}^{\alpha}}v_{i,j}\frac{\bar{y}_{i,j,s,t}|I_t|}{p_{i,j,s}}\\
 & = & \frac{1}{\gamma^{\beta-1}\alpha^{\beta}} \sum_{s\in\mathcal{V}}s^{\beta-1}\sum_{t=0}^{t_{i,j}^{\alpha}}v_{i,j}\frac{\bar{y}_{i,j,s,t}|I_t|}{v_{i,j}/s} \\
 & = & \frac{1}{\gamma^{\beta-1}\alpha^{\beta}} \sum_{s\in\mathcal{V}}\sum_{t=0}^{t_{i,j}^{\alpha}}\bar{y}_{i,j,s,t}|I_t|s^{\beta}\\
 & \leq & \frac{1}{\gamma^{\beta-1}\alpha^{\beta}} \sum_{s\in\mathcal{V}}\sum_{t=0}^u\bar{y}_{i,j,s,t}|I_t|s^{\beta}
 = \frac{1}{\gamma^{\beta-1}\alpha^{\beta}} \bar{E}_{i,j}
\end{eqnarray*}
and the lemma follows.\qed
\end{proof}

The following lemma provides a lower bound to the completion time $\bar{C}_{i,j}$ of the task $T_{i,j} \in \mathcal{T}$ given by the (LP).
\begin{lemma}
If $\lambda < \alpha \frac{v_{\min}}{s_{\max}}$, then
for each task $T_{i,j} \in \mathcal{T}$ it holds that $\bar{C}_{i,j}\geq(1-\alpha)\cdot\tau_{t_{i,j}^{\alpha}}$.
\label{le:Clowerbound}
\end{lemma}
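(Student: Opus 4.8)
The plan is to extract the bound directly from the completion-time constraint~(\ref{lp:p3}) and to pair it with the definition of the $\alpha$-point. Write $\theta = t_{i,j}^{\alpha}$ for brevity. Since every term on the right-hand side of~(\ref{lp:p3}) is nonnegative, I would discard the ``$\tfrac12$'' contributions and keep only the weighted start-times, obtaining
\begin{equation*}
\bar{C}_{i,j} \ge \sum_{t=1}^{u}\sum_{s\in\mathcal{V}} \frac{\bar{y}_{i,j,s,t}|I_t|}{p_{i,j,s}}\,\tau_t .
\end{equation*}
The goal is then to lower bound this expression by $\tau_{\theta}$ times the LP-fraction of $T_{i,j}$ that is processed in the intervals $I_{\theta},\dots,I_u$, exploiting that the sequence $\tau_0 \le \tau_1 \le \dots \le \tau_{u+1}$ is nondecreasing.

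The key step — and the only place where the hypothesis $\lambda < \alpha\,v_{\min}/s_{\max}$ enters — is to rule out $\theta = 0$, i.e.\ to guarantee that the $\alpha$-point does not lie in the first interval $I_0 = (0,\lambda]$. For this I would bound the fraction of $T_{i,j}$ that can be processed inside $I_0$. Using $|I_0| = \lambda$ and $p_{i,j,s} = v_{i,j}/s$, this fraction equals $\sum_{s}\bar{y}_{i,j,s,0}\lambda s / v_{i,j}$, which is at most $(\lambda s_{\max}/v_{i,j})\sum_{s}\bar{y}_{i,j,s,0}$. Constraint~(\ref{lp:p2}), restricted to the single task $T_{i,j}$ on processor $i$, gives $\sum_{s}\bar{y}_{i,j,s,0}\le 1$, so the fraction is at most $\lambda s_{\max}/v_{\min} < \alpha$ by hypothesis. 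Hence strictly more than a $(1-\alpha)$-fraction of $T_{i,j}$ remains after the end of $I_0$, which forces $\theta \ge 1$.

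With $\theta \ge 1$ in hand, the remaining estimate is routine. Because $\theta \ge 1$, the indices $\theta,\dots,u$ all appear in the sum above, and replacing $\tau_t$ by the smaller $\tau_{\theta}$ on those indices gives $\bar{C}_{i,j} \ge \tau_{\theta}\sum_{t=\theta}^{u}\sum_{s\in\mathcal{V}} \bar{y}_{i,j,s,t}|I_t|/p_{i,j,s}$. By the minimality in the definition of the $\alpha$-point, $\sum_{t=0}^{\theta-1}\sum_{s}\bar{y}_{i,j,s,t}|I_t|/p_{i,j,s} < \alpha$, and since~(\ref{lp:p1}) makes the total fraction equal to $1$, the residual sum over $t \ge \theta$ exceeds $1-\alpha$. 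Combining the two inequalities yields $\bar{C}_{i,j} > (1-\alpha)\tau_{\theta} = (1-\alpha)\tau_{t_{i,j}^{\alpha}}$, as required. The main subtlety is thus structural rather than computational: the claim is vacuous when $\theta = 0$ (then $\tau_0 = 0$), and the hypothesis on $\lambda$ is precisely calibrated to push the $\alpha$-point out of the degenerate interval $I_0$ so that a strictly positive bound can be asserted uniformly over all tasks.
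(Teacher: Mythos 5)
Your proof is correct and follows essentially the same route as the paper's: drop the $\tfrac12$-terms in constraint~(\ref{lp:p3}), restrict the sum to $t \geq t_{i,j}^{\alpha}$, replace $\tau_t$ by $\tau_{t_{i,j}^{\alpha}}$, and invoke constraint~(\ref{lp:p1}) together with the minimality in the definition of the $\alpha$-point to bound the residual fraction by $1-\alpha$. The only difference is that you explicitly verify, via $|I_0|=\lambda$, $p_{i,j,s}=v_{i,j}/s$ and constraint~(\ref{lp:p2}), that the hypothesis on $\lambda$ forces $t_{i,j}^{\alpha}\geq 1$, a fact the paper merely asserts.
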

\begin{proof}
Recall that $t_{i,j}^{\alpha}$ corresponds to the interval $I_{t_{i,j}^{\alpha}} = (\tau_{t_{i,j}^{\alpha}},\tau_{t_{i,j}^{\alpha}+1}]$.
If we select $\lambda < \alpha \frac{v_{\min}}{s_{\max}}$, then there is no task with $\alpha$-point to the interval $I_0$.
Hence, we can consider that the $\alpha$-point of each task $T_{i,j} \in \mathcal{T}$
corresponds to an interval of the form $(\lambda(1+\delta)^{t_{i,j}^{\alpha}-1},\lambda(1+\delta)^{t_{i,j}^{\alpha}}]$.

Starting from constraint~(\ref{lp:p3}) we have that
\begin{eqnarray*}
\bar{C}_{i,j} & \geq & \frac{1}{2} \sum_{s\in \mathcal{V}} \bar{y}_{i,j,s,0}|I_0| \left(\frac{1}{p_{i,j,s}} + 1 \right) \\
 && +\sum_{t=1}^u \sum_{s\in \mathcal{V}} \left(\frac{\bar{y}_{i,j,s,t}|I_t|}{p_{i,j,s}}\tau_t + \frac{1}{2} \bar{y}_{i,j,s,t}|I_t|\right)\\
 & \geq & \sum_{t=t_{i,j}^{\alpha}}^u \sum_{s\in \mathcal{V}} \left(\frac{\bar{y}_{i,j,s,t}|I_t|}{p_{i,j,s}}\tau_t + \frac{1}{2} \bar{y}_{i,j,s,t}|I_t|\right)\\
 & \geq & \sum_{t=t_{i,j}^{\alpha}}^u \sum_{s\in \mathcal{V}} \frac{\bar{y}_{i,j,s,t}|I_t|}{p_{i,j,s}}\tau_t \\
 & \geq & \tau_{t_{i,j}^{\alpha}} \sum_{t=t_{i,j}^{\alpha}}^u \sum_{s\in \mathcal{V}} \frac{\bar{y}_{i,j,s,t}|I_t|}{p_{i,j,s}}
 \geq (1-\alpha) \cdot \tau_{t_{i,j}^{\alpha}}
\end{eqnarray*}
where the last inequality holds by constraint~(\ref{lp:p1}) and as by the definition of $\alpha$-point we know that
$\sum_{t=0}^{t_{i,j}^{\alpha}-1} \sum_{s\in \mathcal{V}} \frac{\bar{y}_{i,j,s,t}|I_t|}{p_{i,j,s}} < \alpha$.
\qed
\end{proof}

The following theorem gives the approximation ratio of \algomr.

\begin{theorem}
\algomr is a $\frac{1}{\gamma^{\beta-1}\alpha^{\beta}}$-energy $\frac{\gamma^2+3\gamma+1}{1-\alpha}(1+\delta)$-approximation algorithm for the \dmr problem,
where $\alpha\in(0,1)$, $\gamma>0$ and $\delta>0$.
\label{thm:main}
\end{theorem}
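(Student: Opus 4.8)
The plan is to prove the two guarantees separately, with the energy bound being essentially immediate and the quality bound carrying all the work. For energy, I would sum Lemma~\ref{le:speed} over all tasks and invoke the LP energy constraint~(\ref{lp:p5}): $\sum_{T_{i,j}\in\mathcal{T}} E_{i,j} \le \frac{1}{\gamma^{\beta-1}\alpha^{\beta}} \sum_{T_{i,j}\in\mathcal{T}} \bar{E}_{i,j} \le \frac{1}{\gamma^{\beta-1}\alpha^{\beta}} E$, so \algomr never consumes more than a $\frac{1}{\gamma^{\beta-1}\alpha^{\beta}}$ factor of the budget. The remaining task is to reduce the approximation guarantee to the single per-task inequality $C_{i,j} \le (\gamma^2+3\gamma+1)\,\tau_{t_{i,j}^{\alpha}+1}$ for every Reduce task, since this is what produces the claimed ratio.

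The first building block I would establish is a load bound on each processor: for any processor $i$ and any level $\ell$, the total algorithmic processing time of the tasks with $\alpha$-point at most $\ell$ is at most $\gamma\,\tau_{\ell+1}$. This follows by writing each $p_{i,j'} = \gamma\sum_{t=0}^{t_{i,j'}^{\alpha}}\sum_{s}\bar{y}_{i,j',s,t}|I_t|$, enlarging the inner range to $t=0,\dots,\ell$, exchanging summations, and applying the capacity constraint~(\ref{lp:p2}) interval by interval together with $\sum_{t=0}^{\ell}|I_t| = \tau_{\ell+1}$. With this I would next bound the completion time of a \emph{Map} task. The decisive observation is that any task of strictly lower priority (larger $\alpha$-point) becomes available only at its own $\tau_{t^{\alpha}+1}$, which for a Map task $T_{i,j}$ strictly exceeds its availability $\tau_{t_{i,j}^{\alpha}+1}$; hence no lower-priority task can be in progress when $T_{i,j}$ becomes available. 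Therefore the processor is never idle after $T_{i,j}$ is available and only runs tasks of priority at least that of $T_{i,j}$, yielding $C_{i,j} \le \tau_{t_{i,j}^{\alpha}+1} + \gamma\,\tau_{t_{i,j}^{\alpha}+1} = (1+\gamma)\,\tau_{t_{i,j}^{\alpha}+1}$.

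The \emph{Reduce} tasks are where I expect the main obstacle, because the no-straddler argument fails: a Reduce task becomes available only after all its Map tasks finish, so its availability may be pushed well past $\tau_{t_{i,j}^{\alpha}+1}$, and then one genuinely lower-priority task may be straddling the availability instant. I would handle this in three pieces and add them. First, the precedence constraints~(\ref{lp:p6}) force $t_{i',j}^{\alpha} \le t_{i,j}^{\alpha}$ for every Map task $T_{i',j}$ of the same job, so the Map bound above gives availability at most $(1+\gamma)\,\tau_{t_{i,j}^{\alpha}+1}$. Second, there is at most one straddling task $T_{i,k}$ of lower priority; since it had to be available before $T_{i,j}$, its availability $\tau_{t_{i,k}^{\alpha}+1}$ lies below $(1+\gamma)\,\tau_{t_{i,j}^{\alpha}+1}$, and combined with the per-interval bound $p_{i,k}\le\gamma\,\tau_{t_{i,k}^{\alpha}+1}$ this caps its length at $\gamma(1+\gamma)\,\tau_{t_{i,j}^{\alpha}+1}$. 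Third, the higher-or-equal priority load contributes at most $\gamma\,\tau_{t_{i,j}^{\alpha}+1}$. Summing availability, straddler, and priority load gives $C_{i,j} \le \big[(1+\gamma) + \gamma(1+\gamma) + \gamma\big]\,\tau_{t_{i,j}^{\alpha}+1} = (\gamma^2+3\gamma+1)\,\tau_{t_{i,j}^{\alpha}+1}$, which is exactly the source of the quadratic factor.

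To finish I would push these bounds through the LP. Using $\tau_{t_{i,j}^{\alpha}+1} = (1+\delta)\,\tau_{t_{i,j}^{\alpha}}$ and Lemma~\ref{le:Clowerbound}, namely $\tau_{t_{i,j}^{\alpha}} \le \bar{C}_{i,j}/(1-\alpha)$, each Reduce task satisfies $C_{i,j} \le \frac{(\gamma^2+3\gamma+1)(1+\delta)}{1-\alpha}\,\bar{C}_{i,j}$. Since $C_j$ is the completion time of a Reduce task and $\bar{C}_{i,j} \le \bar{C}_j$ by~(\ref{lp:p4}), this lifts to $C_j \le \frac{(\gamma^2+3\gamma+1)(1+\delta)}{1-\alpha}\,\bar{C}_j$; multiplying by $w_j$, summing over jobs, and using that (LP) is a relaxation (so $\sum_j w_j \bar{C}_j$ lower-bounds the optimum) yields the stated ratio. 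The two delicate points to get exactly right are the no-straddler claim for Map tasks and the length bound on the unique straddler for Reduce tasks; the rest is interval bookkeeping built on constraint~(\ref{lp:p2}).
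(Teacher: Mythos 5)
Your proposal is correct and follows essentially the same route as the paper's proof: the same load bound via constraint~(\ref{lp:p2}), the Map-task bound $C_{i,j}\leq(\gamma+1)\tau_{t_{i,j}^{\alpha}+1}$, the Reduce-task decomposition into availability time, a single lower-priority straddler of length at most $\gamma t$, and the higher-priority load, summing to $(\gamma^2+3\gamma+1)\tau_{t_{i,j}^{\alpha}+1}$, followed by Lemma~\ref{le:Clowerbound} and constraint~(\ref{lp:p4}). The only difference is presentational: you spell out the no-straddler argument for Map tasks and the energy summation over Lemma~\ref{le:speed}, which the paper leaves implicit.
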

\begin{proof}
Consider the schedule $\mathcal{S}$ produced by \algomr and let $T_{i,j} \in \mathcal{M}$ be any Map task.
Recall that $\sigma_i$ is the priority list of processor $i$.
Let $\sigma_i(j) \subseteq \sigma_i$ be the list of tasks with priority higher than the priority of $T_{i,j}$ in $\sigma_i$, including $T_{i,j}$.
Then, for $C_{i,j}$ it holds that
\begin{equation}
C_{i,j} \leq \tau_{t_{i,j}^{\alpha}+1} + \sum_{k\in\sigma_i(j)}p_{i,k} \label{eq:up}
\end{equation}
as $T_{i,j}$ is always available after $\tau_{t_{i,j}^{\alpha}+1}$, as a Map task.
For the total processing time of jobs in $\sigma_i(j)$ we have that
\begin{eqnarray*}
\sum_{k\in\sigma_i(j)}p_{i,k}
 & = & \sum_{k\in\sigma_i(j)} \gamma \sum_{t=0}^{t_{i,k}^{\alpha}} \sum_{s\in\mathcal{V}} \bar{y}_{i,k,s,t}|I_t|\\
 & \leq & \gamma \sum_{k\in\sigma_i(j)} \sum_{t=0}^{t_{i,j}^{\alpha}} \sum_{s\in\mathcal{V}} \bar{y}_{i,k,s,t}|I_t| \\
 & \leq & \gamma \sum_{k\in\sigma_i} \sum_{t=0}^{t_{i,j}^{\alpha}} \sum_{s\in\mathcal{V}} \bar{y}_{i,k,s,t}|I_t|\\
 & = & \gamma \sum_{t=0}^{t_{i,j}^{\alpha}}|I_t| \sum_{k\in\sigma_i} \sum_{s\in\mathcal{V}} \bar{y}_{i,k,s,t}
 \leq \gamma \sum_{t=0}^{t_{i,j}^{\alpha}} |I_t|
 = \gamma \tau_{t_{i,j}^{\alpha}+1}
\end{eqnarray*}
where the last inequality holds by applying constraint~(\ref{lp:p2}) of the (LP).
Thus, from inequality~(\ref{eq:up}) we have
\begin{equation}
C_{i,j} \leq (\gamma+1)\tau_{t_{i,j}^{\alpha}+1} \label{eq:map}
\end{equation}
for each Map task $T_{i,j} \in \mathcal{T}$.

Consider now a job $j \in \mathcal{J}$ and let $T_{i,j} \in \mathcal{R}$ be a Reduce task of $j$.
Moreover, let $T_{i',j} \in \mathcal{M}$ be the Map task of $j$ that completes last in $\mathcal{S}$,
i.e., $C_{i',j}=\max\{C_{i,j}:T_{i,j}\in \mathcal{M}, i \in \mathcal{P}\}$.
By definition, $T_{i,j}$ becomes available at time $t=\max\{\tau_{t_{i,j}^{\alpha}+1},C_{i',j}\}$.
Note that
\begin{equation*}
t \leq \max\{\tau_{t_{i,j}^{\alpha}+1},(\gamma+1)\tau_{t_{i',j}^{\alpha}+1}\}
\leq \max\{\tau_{t_{i,j}^{\alpha}+1},(\gamma+1)\tau_{t_{i,j}^{\alpha}+1}\}
= (\gamma+1)\tau_{t_{i,j}^{\alpha}+1}
\end{equation*}
where the first inequality holds by inequality~(\ref{eq:map}) and the second by the constraint~(\ref{lp:p6}) of (LP).

Let again $\sigma_i(j)$ be the list of tasks with higher priority than $T_{i,j}$ in $\sigma_i$, including $T_{i,j}$.
If in the schedule $\mathcal{S}$ the processor $i$ at time $t$ executes a task $T_{i,j'} \not\in \sigma_i(j)$,
then for the completion time of $T_{i,j}$ it holds that
\begin{equation}
C_{i,j} \leq t + p_{i,j'} + \sum_{k\in\sigma_i(j)} p_{i,k} \label{eq:red}
\end{equation}
because $T_{i,j}$ is available after time $t$ and it has higher priority than any task $T_{i,j''} \not \in \sigma_i(j)$.
As before, we have that
\begin{equation*}
\sum_{k\in\sigma_i(j)} p_{i,k} \leq \gamma \tau_{t_{i,j}^{\alpha}+1}
\end{equation*}
Moreover, for the processing time of $T_{i,j'}$ it holds that
\begin{equation*}
p_{i,j'} = \gamma \sum_{t=0}^{t_{i,j'}^{\alpha}} \sum_{s \in \mathcal{V}} \bar{y}_{i,j',s,t} |I_t| \leq \gamma \tau_{t_{i,j'}^{\alpha}+1} < \gamma t
\end{equation*}
as $T_{i,j'}$ is executed at time $t$ and hence it is available.
Then, by equation~(\ref{eq:red}) we have
\begin{equation*}
C_{i,j} \leq t + \gamma t + \gamma\tau_{t_{i,j}^{\alpha}+1} \leq ((\gamma+1)^2+\gamma)\tau_{t_{i,j}^{\alpha}+1} = (\gamma^2+3\gamma+1) \tau_{t_{i,j}^{\alpha}+1}
\end{equation*}

As $\tau_{t_{i,j}^{\alpha}+1} = (1+\delta)\tau_{t_{i,j}^{\alpha}}$, using Lemma~\ref{le:Clowerbound} we get
\begin{equation*}
C_{i,j} \leq \frac{\gamma^2+3\gamma+1}{1-\alpha}(1+\delta)\bar{C}_{i,j},\label{eq:cij}
\end{equation*}
and by using constraint (\ref{lp:p4}) of (LP)
\begin{equation*}
C_{i,j} \leq \frac{\gamma^2+3\gamma+1}{1-\alpha}(1+\delta)\bar{C}_{j}
\end{equation*}
Since the above inequality holds for each processor $i \in \mathcal{P}$,
it must also hold for $C_j = \max_{i\in\mathcal{P}}\{C_{i,j}\}$ and thus
\begin{equation*}
C_j \leq \frac{\gamma^2+3\gamma+1}{1-\alpha}(1+\delta)\bar{C}_j
\end{equation*}
If we sum up all weighted completion times in $\mathcal{S}$ we yield
\begin{equation*}
\sum_{j\in \mathcal{J}}w_jC_j \leq \frac{\gamma^2+3\gamma+1}{1-\alpha}(1+\delta)\sum_{j\in \mathcal{J}}w_j\bar{C}_j
\end{equation*}
and as $\sum_{j\in \mathcal{J}}w_j\bar{C}_j$ is a lower bound to the objective value of an optimal solution
for the \dmr problem, the theorem follows. \qed
\end{proof}

Note that in the absence of precedence constraints between the tasks, the above analysis can be improved.
Indeed, we can consider that all tasks are Map tasks,
and hence an upper bound to their completion time into the schedule created by \algomr is given by Inequality~(\ref{eq:map}).
Then, the following corollary holds.

\begin{corollary} \label{cor:noprec}
\algomr is a $\frac{1}{\gamma^{\beta-1}\alpha^{\beta}}$-energy $\frac{\gamma+1}{1-\alpha}(1+\delta)$-approximation algorithm for the \dmr problem without precedence constraints,
where $\alpha\in(0,1)$, $\gamma>0$ and $\delta>0$.
\end{corollary}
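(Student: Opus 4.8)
The plan is to replay the argument of Theorem~\ref{thm:main}, but to observe that once precedence constraints are removed, only its easier ``Map task'' case ever arises; this is precisely what improves the schedule-quality constant from $\gamma^2+3\gamma+1$ to $\gamma+1$, while the energy bound carries over unchanged.

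First I would dispose of the energy bound, which needs no new work. Lemma~\ref{le:speed} bounds the energy $E_{i,j}$ spent by \algomr on each task individually by $\frac{1}{\gamma^{\beta-1}\alpha^{\beta}}\bar{E}_{i,j}$, and its proof (through Lemma~\ref{le:technical}) depends only on the per-task speed assignment, never on the precedence structure. Summing over all tasks and invoking constraint~(\ref{lp:p5}) of (LP), which guarantees $\sum_{T_{i,j}\in\mathcal{T}}\bar{E}_{i,j}\leq E$, bounds the total energy by $\frac{1}{\gamma^{\beta-1}\alpha^{\beta}}E$. Hence the energy augmentation factor stays $\frac{1}{\gamma^{\beta-1}\alpha^{\beta}}$ verbatim.

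The heart of the proof is the completion-time bound, and here I would argue that without precedence constraints every task behaves exactly like a Map task in the analysis of Theorem~\ref{thm:main}. Indeed, a task with no predecessors becomes available precisely at $\tau_{t_{i,j}^{\alpha}+1}$, and any task of strictly lower priority has a strictly larger $\alpha$-point and hence becomes available strictly later; consequently no lower-priority task can already be occupying processor $i$ when $T_{i,j}$ becomes available. This is exactly the property that let inequality~(\ref{eq:up}) omit a blocking term for Map tasks, whereas for Reduce tasks the precedence-induced delay to $\max\{\tau_{t_{i,j}^{\alpha}+1},C_{i',j}\}$ forced the extra term $p_{i,j'}$ of inequality~(\ref{eq:red}). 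Since that situation never occurs now, inequality~(\ref{eq:map}), namely $C_{i,j}\leq(\gamma+1)\tau_{t_{i,j}^{\alpha}+1}$, holds for \emph{every} $T_{i,j}\in\mathcal{T}$.

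From there the conclusion is the final chain of Theorem~\ref{thm:main} with $\gamma^2+3\gamma+1$ replaced by $\gamma+1$: using $\tau_{t_{i,j}^{\alpha}+1}=(1+\delta)\tau_{t_{i,j}^{\alpha}}$ together with Lemma~\ref{le:Clowerbound} gives $C_{i,j}\leq\frac{\gamma+1}{1-\alpha}(1+\delta)\bar{C}_{i,j}$, constraint~(\ref{lp:p4}) then yields $C_{i,j}\leq\frac{\gamma+1}{1-\alpha}(1+\delta)\bar{C}_{j}$, taking the maximum over processors bounds $C_j$, and summing the weighted completion times while using that $\sum_{j\in\mathcal{J}}w_j\bar{C}_j$ lower-bounds the optimum delivers the claimed ratio. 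The only genuinely new point I expect to have to nail down is the availability observation of the previous paragraph---that dropping precedences really does eliminate lower-priority blocking for all tasks; once that is made precise, the remainder is a routine replay of the theorem.
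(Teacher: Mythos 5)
Your proposal is correct and follows essentially the same route as the paper, which simply observes that without precedence constraints every task can be treated as a Map task, so that inequality~(\ref{eq:map}) applies to all of $\mathcal{T}$ and the final chain of Theorem~\ref{thm:main} goes through with $\gamma+1$ in place of $\gamma^2+3\gamma+1$, the energy bound of Lemma~\ref{le:speed} being unaffected. One pedantic note: a lower-priority task need not have a \emph{strictly} larger $\alpha$-point---ties are possible---but it still becomes available no earlier than $\tau_{t_{i,j}^{\alpha}+1}$, which is all your blocking argument actually needs.
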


Moreover, in the absence of both precedence constraints between tasks and release dates of jobs, our analysis can be further improved.
As before, we can consider that all tasks are Map tasks.
In addition, we can drop the demand that a task $T_{i,j} \in \mathcal{T}$ becomes available for the algorithm after the time $\tau_{t_{i,j}^{\alpha}+1}$.
Hence, Inequality~(\ref{eq:up}) is simplified to $C_{i,j} \leq \sum_{k \in \sigma_i(j)} p_{i,k}$,
as all tasks are released at time 0 and they are available at any time.
Then, the following corollary holds.

\begin{corollary} \label{cor:noprecreal}
\algomr is a $\frac{1}{\gamma^{\beta-1}\alpha^{\beta}}$-energy $\frac{\gamma}{1-\alpha}(1+\delta)$-approximation algorithm
for the \dmr problem without precedence constraints and release dates,
where $\alpha\in(0,1)$, $\gamma>0$ and $\delta>0$.
\end{corollary}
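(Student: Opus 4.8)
The plan is to mirror the proof of Theorem~\ref{thm:main}, specializing the completion-time analysis to take advantage of the two missing ingredients. First I would observe that the energy guarantee $\frac{1}{\gamma^{\beta-1}\alpha^{\beta}}$ is inherited verbatim from Lemma~\ref{le:speed}: its statement and proof bound $E_{i,j}$ per task via Lemma~\ref{le:technical} and constraint~(\ref{lp:p5}), and make no reference whatsoever to precedence constraints or release dates. Hence only the approximation factor for $\sum_{j\in\mathcal{J}} w_j C_j$ needs to be re-derived, and the whole task reduces to sharpening the upper bound on $C_{i,j}$.

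For that bound, since there are no precedence constraints I would treat every task as a Map task, so that no Reduce-task delay of the form $\max\{\tau_{t_{i,j}^{\alpha}+1},C_{i',j}\}$ ever arises. Moreover, with all release dates equal to zero every task is \emph{available} to the algorithm at every point in time, so the availability threshold $\tau_{t_{i,j}^{\alpha}+1}$ disappears entirely and Inequality~(\ref{eq:up}) collapses to $C_{i,j} \leq \sum_{k\in\sigma_i(j)} p_{i,k}$. The key point I would stress is that the estimate $\sum_{k\in\sigma_i(j)} p_{i,k} \leq \gamma\,\tau_{t_{i,j}^{\alpha}+1}$, already established inside the proof of Theorem~\ref{thm:main}, uses only the definition of $p_{i,k}$, the ordering of $\alpha$-points along $\sigma_i$, and constraint~(\ref{lp:p2}); it therefore remains valid here unchanged. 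Combining the two gives the sharper per-task estimate $C_{i,j} \leq \gamma\,\tau_{t_{i,j}^{\alpha}+1}$, which is exactly where the leading coefficient drops from $\gamma^2+3\gamma+1$ (or $\gamma+1$ in Corollary~\ref{cor:noprec}) down to $\gamma$.

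From there I would reproduce the closing steps of Theorem~\ref{thm:main} almost verbatim: write $\tau_{t_{i,j}^{\alpha}+1} = (1+\delta)\,\tau_{t_{i,j}^{\alpha}}$, invoke Lemma~\ref{le:Clowerbound} to get $\tau_{t_{i,j}^{\alpha}} \leq \bar{C}_{i,j}/(1-\alpha)$, and conclude $C_{i,j} \leq \frac{\gamma}{1-\alpha}(1+\delta)\,\bar{C}_{i,j}$. Using constraint~(\ref{lp:p4}) to pass from $\bar{C}_{i,j}$ to $\bar{C}_j$, taking the maximum over the processors to obtain $C_j$, summing the weighted completion times, and finally using that $\sum_{j\in\mathcal{J}} w_j \bar{C}_j$ lower-bounds the optimum of the \dmr problem then yields the claimed ratio $\frac{\gamma}{1-\alpha}(1+\delta)$.

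There is essentially no genuine obstacle, as the statement is a specialization of a result whose machinery is already in place. The one point I would verify carefully is that removing the release dates truly eliminates the additive $\tau_{t_{i,j}^{\alpha}+1}$ term rather than merely shrinking it; this holds precisely because, with zero release dates and no precedences, every task of $\sigma_i(j)$ is available from time $0$, so the list-scheduling rule of \algomr never idles processor $i$ while some task of $\sigma_i(j)$ is still unexecuted, and $C_{i,j}$ is bounded by the total processing time of $\sigma_i(j)$ alone.
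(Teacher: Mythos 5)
Your proposal is correct and matches the paper's own argument: the paper likewise keeps the energy bound of Lemma~\ref{le:speed} unchanged, treats all tasks as Map tasks, drops the availability threshold so that Inequality~(\ref{eq:up}) collapses to $C_{i,j} \leq \sum_{k\in\sigma_i(j)} p_{i,k} \leq \gamma\,\tau_{t_{i,j}^{\alpha}+1}$, and then concludes via Lemma~\ref{le:Clowerbound} exactly as you describe.
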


By combining Lemma~\ref{le:discretespeeds}, Theorem~\ref{thm:main} and Corollaries~\ref{cor:noprec} and~\ref{cor:noprecreal},
and as we can select an $\varepsilon$ such that $(1+\delta)(1+\epsilon)\leq(1+\varepsilon)$, the following theorem holds.

\begin{theorem}
There is a $\frac{1}{\gamma^{\beta-1}\alpha^{\beta}}$-energy $\frac{\gamma^2+3\gamma+1}{1-\alpha}(1+\varepsilon)$-approximation algorithm for the \mr problem,
a $\frac{1}{\gamma^{\beta-1}\alpha^{\beta}}$-energy $\frac{\gamma+1}{1-\alpha}(1+\varepsilon)$-approximation algorithm for the \mr problem without precedence constraints,
and a $\frac{1}{\gamma^{\beta-1}\alpha^{\beta}}$-energy $\frac{\gamma}{1-\alpha}(1+\varepsilon)$-approximation algorithm for the \mr problem without precedence constraints and release dates,
where $\alpha\in(0,1)$, $\gamma>0$ and $\varepsilon>0$.
\end{theorem}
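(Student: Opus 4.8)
The plan is to chain the three approximation guarantees for \dmr (Theorem~\ref{thm:main} and its two corollaries) with the speed-discretization reduction of Lemma~\ref{le:discretespeeds}, which is exactly the bridge between the discretized problem \dmr and the original continuous-speed problem \mr. No new scheduling analysis is needed; the statement is purely a composition of multiplicative factors.

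First I would invoke Lemma~\ref{le:discretespeeds}: for any instance of \mr there is a feasible schedule using only speeds in $\mathcal{V}$ whose objective is at most $(1+\epsilon)$ times the optimum of \mr, and whose energy does not exceed $E$ (the speeds are rounded strictly downward, so energy only decreases). Hence the optimal objective of the corresponding \dmr instance is at most $(1+\epsilon)$ times the optimal objective of \mr. Next I would apply Theorem~\ref{thm:main} to that \dmr instance: \algomr returns a schedule whose objective is within $\frac{\gamma^2+3\gamma+1}{1-\alpha}(1+\delta)$ of the \dmr optimum, while its total energy is within the factor $\frac{1}{\gamma^{\beta-1}\alpha^{\beta}}$ of the energy used by the (LP) optimum, which respects the budget $E$. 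Composing the two objective bounds yields an overall objective at most $\frac{\gamma^2+3\gamma+1}{1-\alpha}(1+\delta)(1+\epsilon)$ times the optimum of \mr, with the energy-augmentation factor unchanged.

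Finally, since $\delta$ and $\epsilon$ are both free parameters, I would choose them (for instance by taking them small and equal) so that $(1+\delta)(1+\epsilon)\leq(1+\varepsilon)$ for the target accuracy $\varepsilon>0$, exactly as the excerpt signals before the statement. This collapses the two rounding factors into a single $(1+\varepsilon)$ term and delivers the first claimed bound $\frac{\gamma^2+3\gamma+1}{1-\alpha}(1+\varepsilon)$ with energy augmentation $\frac{1}{\gamma^{\beta-1}\alpha^{\beta}}$. The remaining two bounds follow verbatim by substituting Corollary~\ref{cor:noprec} (no precedence constraints, objective factor $\frac{\gamma+1}{1-\alpha}(1+\delta)$) or Corollary~\ref{cor:noprecreal} (no precedence constraints and no release dates, objective factor $\frac{\gamma}{1-\alpha}(1+\delta)$) in place of Theorem~\ref{thm:main} and repeating the same composition.

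The only point requiring care — more a bookkeeping subtlety than a genuine obstacle — is confirming that the speed discretization does \emph{not} compound the energy-augmentation factor. Because Lemma~\ref{le:discretespeeds} rounds speeds downward, the reference \dmr optimum still satisfies the budget $E$, so $\frac{1}{\gamma^{\beta-1}\alpha^{\beta}}$ from Theorem~\ref{thm:main} remains the sole source of energy augmentation and is unaffected by the $(1+\epsilon)$ and $(1+\delta)$ factors, which act only on the objective. Everything else is a routine multiplication of approximation ratios.
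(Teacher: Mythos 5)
Your proposal is correct and follows exactly the paper's own route: the paper proves this theorem in a single sentence by combining Lemma~\ref{le:discretespeeds} with Theorem~\ref{thm:main} and Corollaries~\ref{cor:noprec} and~\ref{cor:noprecreal}, and choosing $\delta,\epsilon$ so that $(1+\delta)(1+\epsilon)\leq(1+\varepsilon)$. Your additional remark that the downward rounding of speeds keeps the discretized reference solution within the budget $E$, so that the energy-augmentation factor does not compound, is a correct and worthwhile clarification that the paper leaves implicit.
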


In Fig.\ref{fig:tradeoff} we depict a tradeoff between energy augmentation and approximation ratio for some practical values of $\beta$.
Note that, by choosing $\gamma=\frac{1}{\alpha \sqrt[\beta-1]{\alpha}}$, energy augmentation is not allowed and the schedule can be converted to a constant-factor approximate schedule.
In this case the following theorem holds.

\begin{theorem}
There is a $\frac{(\alpha \sqrt[\beta-1]{\alpha})^2+3\alpha \sqrt[\beta-1]{\alpha}+1}{(\alpha \sqrt[\beta-1]{\alpha})^2(1-\alpha)}(1+\varepsilon)$-approximation algorithm for the \mr problem,
a $\frac{\alpha \sqrt[\beta-1]{\alpha}+1}{\alpha \sqrt[\beta-1]{\alpha}(1-\alpha)}(1+\varepsilon)$-approximation algorithm for the \mr problem without precedence constraints,
and a $\frac{1}{\alpha \sqrt[\beta-1]{\alpha}(1-\alpha)}(1+\varepsilon)$-approximation algorithm for the \mr problem without precedence constraints and release dates,
where $\alpha\in(0,1)$ and $\varepsilon>0$.
\end{theorem}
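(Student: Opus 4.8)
The plan is to obtain all three claimed ratios by specializing the preceding theorem to the single value of $\gamma$ that makes the energy-augmentation factor equal to one, thereby turning the resource-augmentation guarantee into a genuine approximation guarantee.

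First I would note that the preceding theorem delivers, for every $\gamma>0$, an algorithm whose energy consumption is at most $\frac{1}{\gamma^{\beta-1}\alpha^{\beta}}$ times the budget $E$. The augmentation vanishes precisely when this factor equals $1$, so I would solve $\gamma^{\beta-1}\alpha^{\beta}=1$, giving $\gamma^{\beta-1}=\alpha^{-\beta}$ and hence
\begin{equation*}
\gamma=\alpha^{-\frac{\beta}{\beta-1}}=\alpha^{-1-\frac{1}{\beta-1}}=\frac{1}{\alpha\sqrt[\beta-1]{\alpha}}.
\end{equation*}
Since $\alpha\in(0,1)$ this value is positive (indeed larger than $1$), so it is an admissible choice. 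For this $\gamma$, the schedule produced by \algomr respects the energy budget exactly, so it is a feasible solution to the \mr problem with no augmentation whatsoever.

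Second, I would substitute $\gamma=\frac{1}{\alpha\sqrt[\beta-1]{\alpha}}$ into each of the three approximation ratios from the preceding theorem and clear denominators. For the general \mr problem, multiplying both the numerator and denominator of $\frac{\gamma^2+3\gamma+1}{1-\alpha}$ by $(\alpha\sqrt[\beta-1]{\alpha})^2$ yields the stated ratio; the no-precedence case follows by multiplying $\frac{\gamma+1}{1-\alpha}$ by $\alpha\sqrt[\beta-1]{\alpha}$, and the no-precedence-no-release-dates case follows immediately from $\frac{\gamma}{1-\alpha}=\frac{1}{\alpha\sqrt[\beta-1]{\alpha}(1-\alpha)}$. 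The $(1+\varepsilon)$ factor is inherited unchanged from the preceding theorem, which already absorbed both the speed discretization of Lemma~\ref{le:discretespeeds} and the time discretization.

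There is no genuine conceptual obstacle here: the entire argument is a substitution into an already-proved bound. The only points requiring care are verifying that the chosen $\gamma$ indeed forces the energy factor to $1$ (equivalently, that $\gamma^{\beta-1}\alpha^{\beta}=1$ holds for $\gamma=\frac{1}{\alpha\sqrt[\beta-1]{\alpha}}$) and then performing the routine algebraic simplification of the three ratios; both are mechanical once the exponent identity $\frac{\beta}{\beta-1}=1+\frac{1}{\beta-1}$ is in hand.
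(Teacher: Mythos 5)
Your proposal is correct and is exactly the paper's argument: the paper likewise fixes $\gamma=\frac{1}{\alpha\sqrt[\beta-1]{\alpha}}$ so that $\gamma^{\beta-1}\alpha^{\beta}=1$, turning the energy-augmentation factor of the preceding theorem into $1$, and then reads off the three ratios by direct substitution. The algebraic simplifications you describe match the stated expressions, so there is nothing to add.
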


The ratios of the above theorem can be optimized by selecting the appropriate value of $\alpha$ for each $\beta$.
Table~\ref{tbl:results} gives the achieved ratios for practical values of $\beta$.

\begin{table}[htb]
\begin{center}
\begin{tabular}{c||c|c|c}
  $\beta$ & general & without precedence & without precedence \& without release dates \\
  \hline
  \hline
  2   & 37.52 & 9.44 & 6.75 \\
  2.2 & 34.89 & 8.84 & 6.29 \\
  2.4 & 33.01 & 8.41 & 5.97 \\
  2.6 & 31.59 & 8.09 & 5.72 \\
  2.8 & 30.50 & 7.84 & 5.53 \\
  3   & 29.62 & 7.64 & 5.38
\end{tabular}
\end{center}
\caption{Approximation ratios for the \mr problem for different values of $\beta$.}
\label{tbl:results}
\end{table}

\section{A Convex Programming Approach}\label{se:cp}

We are interested in natural list scheduling policies such as \textsc{First Come First Serve} (\fcfs) and \textsc{Smith Rule} (\sr). However, in our context we need to determine the speeds of every task in order to respect the energy budget. For that, we propose a convex programming relaxation of our problem when an order of the jobs is prespecified.

\subsection{The Convex Program}

Let $\sigma=\langle1,2,\ldots,n\rangle$ be a given order of the jobs.
Consider now the restricted version of the \mr problem where
for each processor $i \in \mathcal{P}$ the tasks are forced to be executed according to this order.
We shall refer to this problem as the \mrs problem.
Note that, the order is the same for all processors.
We write $j \prec j'$ if job $j \in \mathcal{J}$ precedes job $j' \in \mathcal{J}$ in $\sigma$.
We propose a convex program that considers the order $\sigma$ as input and returns a solution that is
a lower bound to the optimal solution for the \mrs problem.

In order to formulate our problem as a convex program, let $p_{i,j}$ be a variable that corresponds to the processing time of task $T_{i,j} \in \mathcal{T}$.
Moreover, for each task $T_{i,j} \in \mathcal{T}$, we introduce a variable $C_{i,j}$ that determines the completion time of $T_{i,j}$.
Finally, let $C_j$, $j\in \mathcal{J}$, be the variable that corresponds to the completion time of job $j$.
Consider the following convex programming formulation of the \mrs problem.
\begin{align}
 & (CP): \text{minimize} \sum_{j \in \mathcal{J}} w_j C_j && \notag\\
 & \text{subject to}: && \notag\\
 & \sum_{T_{i,j} \in \mathcal{T}} \frac{v_{i,j}^{\beta}}{p_{i,j}^{\beta-1}}\leq E && \label{cp:p1}\\
 & r_{j'} + \sum_{k=j'}^j p_{i,k} \leq C_{i,j}, && \forall T_{i,j}, T_{i,j'} \in \mathcal{T}, j' \prec j \label{cp:p2}\\
 & C_{i',j} + p_{i,j} \leq C_{i,j}, && \forall T_{i,j} \in \mathcal{R}, T_{i',j} \in \mathcal{M} \label{cp:p3}\\
 & C_{i,j} \leq C_j, && \forall T_{i,j} \in \mathcal{T} \label{cp:p4}\\
 & s_{i,j}, C_{i,j}, C_j \geq 0, && \forall T_{i,j} \in \mathcal{T}, j \in \mathcal{J} \notag
\end{align}

The objective function of (CP) is to minimize the weighted completion time of all jobs.
Constraint~(\ref{cp:p1}) guarantees that the energy budget is not exceeded.
Constraints~(\ref{cp:p2}) and~(\ref{cp:p3}) give lower bounds on the completion time of each task $T_{i,j} \in \mathcal{T}$,
based on the release dates and the precedence constraints, respectively.
Note that, if we do not consider precedences between the tasks,
then (CP) will return the optimal value of the objective function, instead of a lower bound of it,
as constraints~(\ref{cp:p2}) describe in a complete way the completion times of the tasks.
However, this is not true for constraints~(\ref{cp:p3}) which are responsible for the precedence constraints.
Finally, constraints~(\ref{cp:p4}) ensure that the completion time of each job is the maximum over the completion times among all of its tasks.

As the optimal solution to (CP) does not necessarily describe a feasible schedule,
we need to apply an algorithm that uses the processing times found by (CP) and the order $\sigma$ so as to create a feasible schedule for the \mrs problem,
and hence for the \mr problem.
In fact, it suffices to apply, for example, the Lines~6-8 of \algomr, by considering the same order for all processors.

\begin{figure}[ht]
\centering
\begin{minipage}[b]{0.45\linewidth}
\begin{tikzpicture}[scale=0.6]
\begin{axis}[xlabel=approximation ratio,ylabel=energy augmentation (\%),ymin=0]
\addplot[smooth,thick] plot coordinates {
 (37.52,0)
 (33.32,10)
 (29.97,20)
 (27.25,30)
 (24.99,40)
 (23.10,50)
 (21.49,60)
 (20.10,70)
 (18.90,80)
 (17.84,90)
 (16.91,100)
};
\addplot[smooth,dashed,thick] plot coordinates {
 (32.25,0)
 (29.80,10)
 (27.75,20)
 (26.02,30)
 (24.53,40)
 (23.24,50)
 (22.11,60)
 (21.11,70)
 (20.22,80)
 (19.42,90)
 (18.69,100)
};
\addplot[smooth,dotted,thick] plot coordinates {
 (29.62,0)
 (27.91,10)
 (26.46,20)
 (25.20,30)
 (24.10,40)
 (23.13,50)
 (22.27,60)
 (21.49,70)
 (20.79,80)
 (20.15,90)
 (19.57,100)
};
\legend{$\beta=2~~$\\$\beta=2.5$\\$\beta=3~~$\\}
\end{axis}
\end{tikzpicture}
\caption{Tradeoff between energy augmentation and approximation ratio when $\beta=\{2,2.5,3\}$.}
\label{fig:tradeoff}
\end{minipage}
\quad
\begin{minipage}[b]{0.45\linewidth}
\begin{tikzpicture}[scale=0.6]
\begin{axis}[xlabel=number of jobs,ylabel=$\sum w_j C_j$,ymin=0,legend pos=north west]]
\addplot[smooth,mark=x,thick] plot coordinates {
 (5,1919/1000)
 (10,12857/1000)
 (15,38352/1000)
 (20,85177/1000)
 (25,174398/1000)
};
\addplot[smooth,mark=x,dashed,thick] plot coordinates {
 (5,2321/1000)
 (10,16222/1000)
 (15,47570/1000)
 (20,100876/1000)
 (25,200904/1000)
};
\addplot[smooth,thick] plot coordinates {
 (5,1259/1000)
 (10,8594/1000)
 (15,26120/1000)
 (20,60228/1000)
 (25,119361/1000)
};
\addplot[smooth,dashed,thick] plot coordinates {
 (5,1001/1000)
 (10,6311/1000)
 (15,18943/1000)
 (20,41997/1000)
 (25,86365/1000)
};
\legend{\fcfs\\\sr\\CP(\fcfs)\\CP(\sr)\\}
\end{axis}
\end{tikzpicture}
\caption{Experimental comparison of the solutions of \fcfs and \sr (scaled down by a factor of $10^3$).}
\label{fig:exp}
\end{minipage}
\end{figure}

\subsection{Scheduling Policies}

In this section we propose different orders of jobs and we discuss how far is an optimal solution for the \mrs problem using these orders
with respect to the optimal solution for the \mr problem.
Two standard orders of jobs are the following.
\bigskip

\noindent\textsc{First Come First Serve} (\fcfs): for each pair of jobs $j,j' \in \mathcal{J}$, if $r_j<r_{j'}$ then $j \prec j'$ in $\sigma$.
\smallskip

\noindent\textsc{Smith Rule} (\sr): for each pair of jobs $j,j' \in \mathcal{J}$,
if $\frac{w_j}{\sum_{T_{i,j} \in j} v_{i,j}} > \frac{w_{j'}}{\sum_{T_{i,j'} \in j'} v_{i,j'}}$ then $j \prec j'$ in $\sigma$.
\bigskip

%
%

The following propositions present negative results concerning the approximation ratio that we can achieve if we use the \fcfs or the \sr order.

\begin{proposition}\label{prop:fcfs}
Let $OPT$ and $OPT_{\fcfs}$ be the optimal solutions for the \mr and the \mrf problems, respectively.
There is an instance for which it holds that $\frac{OPT_{\fcfs}}{OPT}=\Omega(n)$.
\end{proposition}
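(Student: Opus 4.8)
The plan is to construct an explicit instance where \fcfs is forced into a bad ordering by release dates, while the optimal schedule can deviate from that ordering to great effect. The key observation is that \fcfs orders jobs \emph{solely} by release date, completely ignoring weights and work volumes. So I would exploit this by creating jobs whose release-date order is the exact opposite of the order a good schedule would use. Concretely, I would take $n$ jobs on a single processor (one Map and one Reduce task each suffices, or even a degenerate single-task-per-job setup consistent with the model) where job $j$ has release date $r_j = j$, so that \fcfs is committed to the order $1 \prec 2 \prec \cdots \prec n$. I would then set the weights so that the last-released job is by far the most important: for instance $w_n$ large and $w_1 = \cdots = w_{n-1}$ small, with all work volumes small and equal.

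The heart of the argument is the contrast between the two schedules. Under \fcfs, the heavy job $n$ is released last and scheduled last, so it must wait behind all $n-1$ lighter jobs; since a task cannot begin before its release date, the completion time of job $n$ is forced to be $\Omega(n)$ times a unit, and its large weight makes $OPT_{\fcfs} = \Omega(w_n \cdot n)$. In the optimal (unconstrained) schedule, by contrast, one is free to delay the cheap early jobs and instead devote the energy budget to running job $n$ quickly as soon as it is released, driving $OPT$ down to $O(w_n)$ plus the negligible contribution of the light jobs. Taking the ratio then yields $OPT_{\fcfs}/OPT = \Omega(n)$. I would choose the release dates and weights so that the contribution of the $n-1$ light jobs is dominated by that of job $n$ in both schedules, which keeps the ratio calculation clean.

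The main technical subtlety to handle carefully is the interaction with the energy budget $E$ and the speed-scaling cost, since unlike the classical setting the completion times are not fixed but depend on how energy is allocated via $p_{i,j} = v_{i,j}/s_{i,j}$ and $E_{i,j} = v_{i,j} s_{i,j}^{\beta-1}$. In the \fcfs schedule, job $n$ starts no earlier than time $\max\{r_n, \text{completion of job } n-1\}$, and because the release dates already stack up to $\Omega(n)$, its completion time is $\Omega(n)$ regardless of how the energy is spent — the release-date floor alone gives the lower bound on $OPT_{\fcfs}$. For the upper bound on $OPT$ I only need to exhibit \emph{one} feasible schedule of value $O(w_n)$: run job $n$ at a speed affordable under budget $E$ starting at $r_n = n$, a constant additive offset that I can absorb by scaling, and schedule the light jobs at negligible speed-energy cost. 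The delicate part will be verifying that this comparison schedule respects the energy budget $E$ simultaneously for all jobs; I would guarantee this by giving the light jobs vanishingly small work volumes (or, equivalently, choosing $E$ large enough to power job $n$ at unit speed while leaving the cheap jobs essentially free), so that constraint~(\ref{cp:p1}) is satisfied with room to spare. Once these two bounds are in place, the ratio $\Omega(n)$ follows directly.
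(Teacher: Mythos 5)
There is a genuine gap: the two halves of your argument rely on contradictory uses of the release date $r_n$. You lower-bound $OPT_{\fcfs}$ by the ``release-date floor'' $C_n \ge r_n = n$, so that $OPT_{\fcfs} = \Omega(w_n\, n)$. But that exact same floor applies to the unconstrained optimum: no schedule may start job $n$ before $r_n = n$, so $OPT \ge w_n\, n$ as well, and your claimed upper bound $OPT = O(w_n)$ is false --- the offset $r_n = n$ is not a constant you can scale away, it grows with $n$ and hits both schedules identically. With the parameters you describe (all work volumes small, or $E$ large enough that the light jobs are ``essentially free''), \fcfs also finishes the light jobs essentially instantly, starts job $n$ at time $r_n$, and achieves $C_n \approx n$; the ratio is then $O(1)$, not $\Omega(n)$. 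The release-date ordering alone cannot create a gap, because \fcfs's only handicap is the \emph{order}, and an order that agrees with the release dates costs nothing when the queued work is negligible.

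The idea of a weight inversion against the \fcfs order can be repaired, but the delay must come from the \emph{energy budget}, not from the release dates: give the $n-1$ light jobs unit volumes, make all release dates infinitesimal (only to fix the order), take $E=1$ and $\beta=2$, and give job $n$ tiny volume and dominant weight. Then any \fcfs schedule must push $n-1$ units of work through the single processor before job $n$, which with budget $E$ takes time $\Omega(n^2)$, while the optimum runs job $n$ first at essentially zero cost. Note that this is a different mechanism from the paper's, which keeps all weights equal and instead uses $n$ processors with a Map/Reduce precedence structure: the common \fcfs order across processors forces the $n$ Map tasks --- which the optimum runs in parallel for total cost $\Theta(n^2)$ --- to be executed sequentially, and a KKT computation under the energy constraint shows the serialized cost is $\Theta(n^3)$. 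Either the energy-serialization argument on one processor or the paper's precedence-serialization argument would close the gap; your release-date argument as written does not.
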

\begin{proof}
Consider an instance consisting of $m$ processors and $n$ jobs, where $m=n$.
The release date of each job $j \in \mathcal{J}$ is $(j-1)\epsilon$, for a very small $\epsilon>0$, and its weight $w_j=1$.
Each job $j \in \mathcal{J}$ consists of $m$ tasks, one per processor.
Moreover, the task $T_{i,j} \in \mathcal{T}$ is a Map task only if $i=j$; otherwise $T_{i,j}$ is a Reduce task.
For each task $T_{i,i} \in \mathcal{M}$, let $v_{i,i}=1$.
For each task $T_{i,j} \in \mathcal{R}$, let $v_{i,j}=\epsilon$.
Let also $E=1$ and $\beta=2$.

Note that, if $\epsilon \ll 1$ then the processing time of each Reduce task can be considered to be very small
in both the optimal schedules for the \mr and the \mrf problems.
So, we can ignore the execution time and the energy consumption of the Reduce tasks.
We only consider the precedence constraints that they imply.

In an optimal solution for the \mr problem, the Map task of job $j$ starts at time $(j-1)\epsilon$.
Due to the convexity and the fact that $w_j=1$ for each $j \in \mathcal{J}$,
we can assume that all Map tasks will be executed with the same speed;
hence the processing time of each Map task is approximately equal to $\sqrt[\beta-1]{\frac{m}{E}}=m$,
as $E=1$ and $\beta=2$.
Thus, the completion time of each job is approximately equal to $m$, and hence $OPT=O(m^2)$.

On the other hand, in an optimal solution for the \mrf problem the Map tasks are not executed in parallel,
as we are forced to respect the order and the precedence constraints.
Ignoring again the processing times of the Reduce tasks, we can assume that the Map task of job $j$ starts at the completion time of job $j-1$.
In order to find the speed $s_j$ of each Map task $T_{j,j} \in \mathcal{T}$
into an optimal solution for the \mrf problem, we have to solve the following convex program.
\begin{equation*}
\text{minimize ~ } \sum_{j=1}^n \frac{n-j+1}{s_j} \text{ ~ subject to ~ } \sum_{j=1}^n s_j \leq E
\end{equation*}
The objective of this convex program corresponds to the objective of the \mrf problem for the given instance,
while the constraint ensures that the selected speeds respect the energy budget.
By applying the Karush-Kuhn-Tucker conditions to this program
we get that $s_j=\frac{E\cdot(n-j+1)^{1/2}}{\sum_{i=1}^n (n-i+1)^{1/2}}$.
By replacing this to the objective we get
\begin{eqnarray*}
OPT_{\fcfs} & = & \sum_{j=1}^n \frac{n-j+1}{\frac{E \cdot (n-j+1)^{1/2}}{\sum_{i=1}^n (n-i+1)^{1/2}}}\\
 & = & \frac{1}{E}\left(\sum_{i=1}^n (n-i+1)^{1/2}\right)^2\\
 & = & \frac{1}{E}\left(\sum_{i=1}^n i^{1/2}\right)^2
 = O\left(\frac{n^3}{E}\right)
\end{eqnarray*}

As $n=m$ and $E=1$, the proposition follows.\qed
\end{proof}

\begin{proposition}\label{prop:sr}
Let $OPT$ and $OPT_{\sr}$ be the optimal solutions for the \mr and the \mrsr problems, respectively.
There is an instance for which it holds that $\frac{OPT_{\sr}}{OPT}=\Omega(n)$.
\end{proposition}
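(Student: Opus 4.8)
The plan is to mimic the construction of Proposition~\ref{prop:fcfs}, but to arrange the weights and work volumes so that it is now the Smith ratio, rather than the release dates, that forces the jobs into the bad order $\langle 1,2,\ldots,n\rangle$. First I would take $m=n$, set all release dates to $0$, place the Map task $T_{j,j}$ of job $j$ on processor $j$ with $v_{j,j}=1$, and let every other task $T_{i,j}$ ($i\neq j$) be a Reduce task with $v_{i,j}=\epsilon$; as before $E=1$ and $\beta=2$. I would choose the weights strictly decreasing and arbitrarily close to $1$, say $w_j=1+(n-j)\eta$ for an infinitesimal $\eta>0$. Since all jobs then have the same total work $1+(n-1)\epsilon$, the Smith ratio $w_j/\sum_{T_{i,j}} v_{i,j}$ is strictly decreasing in $j$, so the unique \sr order is exactly $\sigma=\langle 1,2,\ldots,n\rangle$ (letting $\eta\to0$ recovers unit weights in all the estimates below).

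The key structural step, which I expect to be the heart of the argument, is to show that in the \mrsr problem this order \emph{serializes} the $n$ Map tasks. This is the same chaining as in Proposition~\ref{prop:fcfs}: on processor $i$ the ordering constraint forces the Reduce task $T_{i,j}$ of every job $j<i$ to precede the Map task $T_{i,i}$, while the precedence constraints make each such $T_{i,j}$ wait for the Map task $T_{j,j}$ of job $j$ to complete. Propagating this along $i=1,2,\ldots,n$ shows that $T_{j,j}$ cannot start before $T_{j-1,j-1}$ finishes, so the Map tasks run one after another. Because the Reduce work $\epsilon$ can be made negligible in both time and energy, each job's completion time is, up to $o(1)$ terms, the finishing time of its own Map task.

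It then remains to compute the two optimal values. For \mrsr, if the serialized Map tasks are run at speeds $s_1,\ldots,s_n$ then $C_j\ge\sum_{k\le j}1/s_k$, so the objective equals $\sum_{k}W_{\ge k}/s_k$ with $W_{\ge k}=\sum_{l\ge k}w_l$; minimizing over $\sum_k s_k\le E$ by Cauchy--Schwarz (exactly the KKT computation of Proposition~\ref{prop:fcfs}) gives $OPT_{\sr}=\frac1E\big(\sum_k\sqrt{W_{\ge k}}\big)^2=\Theta\big((\sum_{i=1}^n\sqrt i)^2\big)=\Theta(n^3)$. For the unrestricted \mr problem I would simply exhibit the parallel schedule that runs all $n$ Map tasks simultaneously at a common speed close to $E/n$: each finishes by time $\approx n/E$, so $OPT\le\sum_j w_j\cdot(n/E+o(1))=O(n^2)$. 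Dividing the two bounds yields $OPT_{\sr}/OPT=\Omega(n)$, as required.

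The only points needing care are bookkeeping ones: making the \sr order strictly determined (handled by the infinitesimal perturbation $\eta$, or equivalently by noting that with unit weights every order is a valid \sr order, since the definition constrains only strict ratios), and checking that the Reduce tasks contribute only $o(1)$ to both the energy budget and the completion times as $\epsilon\to0$. The genuinely important step is the serialization lower bound on $OPT_{\sr}$; once the Map tasks are forced to run sequentially, the $\Theta(n^3)$ versus $\Theta(n^2)$ separation is immediate from the speed optimization.
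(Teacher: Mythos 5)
Your proof is correct, but it takes a genuinely different route from the paper's. The paper's instance is much more minimal: a single processor, no Map/Reduce distinction and hence no precedences, with jobs $1,\ldots,n-1$ having $v_j=1$, $w_j=1$, $r_j=0$ and job $n$ having $v_n=1-\epsilon$, $w_n=1$ and a huge release date $r$. There \sr ranks job $n$ first (its Smith ratio is strictly largest), so in \mrsr every job must wait until time $r$, giving $OPT_{\sr}=nr+O(n^3)$ against $OPT=r+O(n^3)$; letting $r\to\infty$ yields the $\Omega(n)$ gap. The mechanism being exposed is thus purely the interaction of \sr with release dates. Your construction instead recycles the gadget of Proposition~\ref{prop:fcfs}: all release dates are zero, and the damage comes from the serialization of the $n$ Map tasks forced by the per-processor ordering combined with the Map--Reduce precedences, which you correctly establish by the same chaining argument ($T_{i,i}$ waits for the Reduce task $T_{i,i-1}$, which waits for the Map task $T_{i-1,i-1}$). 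Your tie-breaking perturbation $\eta$ of the weights is needed and handled properly, and the $\Theta(n^3)$ versus $O(n^2)$ computation is the same KKT/Cauchy--Schwarz calculation the paper already carries out for \fcfs. What each approach buys: the paper's is shorter and isolates release dates as a sufficient cause of failure; yours is slightly heavier but proves a stronger statement, namely that \sr loses a factor $\Omega(n)$ even on instances with common release dates, where the failure is due entirely to the precedence structure across processors.
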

\begin{proof}
We consider a simplified instance which consists of only one processor and does not take into account Map and Reduce tasks and hence precedences.
In this instance the critical issue is the release dates.
For each job $j$, $1 \leq j \leq n-1$, we have $v_j=1$, $w_j=1$ and $r_j=0$,
while for the job $n$ we have $v_n=1-\epsilon$, $w_n=1$ and $r_n=r$, where $r \in \mathbb{R}$ is a big number.
Let $E=1$ and $\beta=2$.

In an optimal schedule for the \mr problem, the jobs $1,2,\ldots,n-1$ are scheduled consecutively starting from time 0,
while the job $n$ is scheduled starting from time $r$.
Let $E_1$ and $E_2$ be parts of the energy budget used for the execution of the jobs $1,2,\ldots,n-1$ and $n$, respectively.
Clearly, it holds that $E_1+E_2=1$.
Hence, following similar analysis as in Proposition~\ref{prop:fcfs} for the \mrf problem,
for the total weighted completion time of the jobs $1,2,\ldots,n-1$ it holds that
\begin{equation*}
\sum_{j=1}^{n-1} w_j C_j = O\left(\frac{n^3}{E_1}\right)
\end{equation*}
The processing time of job $n$ is $E_2$, and hence its completion time is $C_j=r+E_2$.
Therefore, for the optimal solution for the \mr problem we have that
\begin{eqnarray*}
OPT & = & O\left(\frac{n^3}{E_1}\right) + r + \frac{1}{E_2} \\
 & = & O\left(\frac{n^3}{E_1}\right) + r + \frac{1}{1-E_1}
 = r + O(n^3)
\end{eqnarray*}
as this function is minimized for $E_1\simeq1/2$.

On the other hand, in an optimal schedule for the \mrsr problem,
the jobs are scheduled starting from $r$ according to the \sr order, i.e., $\langle n,1,2,\ldots,n-1\rangle$.
As we can choose an $\epsilon$ such that $\epsilon\ll1$, we can assume that all jobs have the same work to execute.
Then, following similar analysis as in Proposition~\ref{prop:fcfs} for the \mrf problem,
we have that $OPT_{SR}=nr+O(n^3)$.

As $r$ can be arbitrary large, the proposition follows. \qed
\end{proof}

\subsection{Experimental Evaluation of Scheduling Policies}

In this section, our goal is to compare the \fcfs and \sr policies with respect to the quality of the solution that they produce.


Our simulations have been performed on a machine with a CPU Intel Xeon X5650 with 8 cores, running at 2.67GHz.
The operating system of the machine is a Linux Debian 6.0.
We used Matlab with cvx toolbox.
The solver used for the convex program is SeDuMi.

The instance of the problem consists of a matrix $m \times n$ that corresponds to the work of the tasks,
two vectors of size $n$ that correspond to the weights and the release dates of jobs,
a precedence graph for the tasks of the same job,
the energy budget and the value of $\beta$.

Similarly with \cite{ChenKL12}, the instance consists of $m=50$ processors and up to $n=25$ jobs.
Each job has 20 Map and 10 Reduce tasks, which are preassigned at random to a different processor.
The work of each Map task is selected uniformly at random in $[1,10]$,
while the work of each Reduce task $v_{i,j} \in \mathcal{R}$ is equal to a random number in $[1,10]$ plus
$\frac{3\sum_{T_{i',j} \in \mathcal{M}} v_{i',j}}{|\{T_{i',j} \in \mathcal{M}\}|}$,
taking into account the fact that Reduce tasks have more work to execute than Map tasks.
The weight of each job is selected uniformly at random in $[1,10]$.
For the release date of a job, we select with probability 1/2 every interval $(t,t+1]$.
Then, the release date is equal to a random value in this interval.
The energy budget that we used is $E=1000$.
We have also set $\beta=2$.
%
We set the desired accuracy of the returned solution of the convex program to be equal to $10^{-7}$.
For each number of jobs we have repeated the experiments with 10 different matrices.
The results we present below, concern the average of these 10 instances.

The benchmark as well as the code we used in our experiments are freely available at\\
\url{http://www.ibisc.univ-evry.fr/~vchau/research/mapreduce/}.


As mentioned before, the (CP) does not lead to a feasible solution for our problem.
In order to get such a solution we apply the following algorithm.
At each time $t$ where a processor becomes available we select to schedule the task $T_{i,j}$ of higher priority such that:
(i) $T_{i,j}$ is already released at $t$,
(ii) if $T_{i,j}$ is a Reduce task, then all Map tasks of the same job have been already completed at $t$, and
(iii) $T_{i,j}$ has not been yet executed.

As shown in Fig.~\ref{fig:exp} the heuristic based on \fcfs outperforms the heuristic based on \sr.
In fact, the first heuristic gives up to $16-21\%$ better solutions that the second one for different values of $n$.
Surprisingly, the situation is completely inverse if we consider the corresponding solutions of the convex programs.
More precisely, the convex programming relaxation using \sr leads to $26\%-43\%$ smaller values of the objective function
with respect to the convex programming relaxation using \fcfs.

Moreover, we can observe that the ratio between the final solution of each heuristic with respect to the lower bound for the \mrs problem
given by the convex program is equal to 1.46 for \fcfs and 2.43 for \sr;
the variance is less than 0.1 in both cases.
However, as we already mentioned, this ratio cannot be considered as the approximation ratio for the \mr problem,
as its optimal solution can be significantly smaller than the optimal solution for the \mrs problem using the \fcfs and \sr orders.

\section{Conclusions}\label{se:con}

We presented a constant-approximation algorithm for the problem of scheduling a set of MapReduce jobs
in order to minimize their total weighted completion time under a given budget of energy.
Our algorithm uses an optimal solution to an LP relaxation in interval-indexed variables and converts it to a feasible
non-preemptive schedule of the \mr problem using the idea of list scheduling in order of a-points.
Moreover, we proposed a convex programming relaxation of the problem when a prespecified order of jobs is given.
Based on the solution of this convex programming relaxation, we explored the efficiency of standard scheduling policies,
by presenting counterexamples for them as well as by experimentally evaluating their performance.
It has to be noticed that our results can be extended also to the case where multiple Map or Reduce tasks of a job are executed on the same processor.
An interesting direction for future work concerns the online case of the problem.
Although, it can be proved that there is no an $O(1)$-competitive deterministic algorithm (see Theorem~13 in~\cite{BansalPS09}),
a possible way to overcome this  is to consider resource (energy) augmentation,
or to study the closely-related objective of a linear combination of the sum of weighted completion times of the jobs and of the total consumed energy.

%

\bibliographystyle{plain}

%
%
%
%
%
%
%
%
%
%
%
%
%
%
%
%
%
%
%
%
%

\end{document}